\documentclass[11pt]{article}

\usepackage{fullpage}

% if you need to pass options to natbib, use, e.g.:
%     \PassOptionsToPackage{numbers, compress}{natbib}
% before loading neurips_2020

% ready for submission
 %\usepackage[final]{neurips_2020}

% to compile a preprint version, e.g., for submission to arXiv, add add the
% [preprint] option:
%     \usepackage[preprint]{neurips_2020}

% to compile a camera-ready version, add the [final] option, e.g.:
%     \usepackage[final]{neurips_2020}

% to avoid loading the natbib package, add option nonatbib:
     %\usepackage[nonatbib]{neurips_2020}

\usepackage[utf8]{inputenc} % allow utf-8 input
\usepackage[T1]{fontenc}    % use 8-bit T1 fonts
\usepackage{hyperref}       % hyperlinks
\usepackage{url}            % simple URL typesetting
\usepackage{booktabs}       % professional-quality tables
\usepackage{amsfonts}       % blackboard math symbols
\usepackage{nicefrac}       % compact symbols for 1/2, etc.
\usepackage{microtype}      % microtypography
\usepackage{amssymb}
\usepackage{amsmath}
\usepackage{amsthm}
\usepackage[numbers]{natbib}
\usepackage{wrapfig}
\usepackage{color}
\usepackage{multirow}
\usepackage{rotating}
\usepackage{bbm}
\usepackage{enumitem}
\usepackage{subcaption}
\usepackage[noend]{algpseudocode}
\usepackage{algorithm}
\usepackage{eqparbox,array}
\usepackage{setspace}

\algrenewcommand{\algorithmiccomment}[1]{\hfill// \eqparbox{COMMENT\thealgorithm}{\footnotesize{\ttfamily#1}}}
\algnewcommand{\LongComment}[1]{\hfill// \begin{minipage}[t]{\eqboxwidth{COMMENT\thealgorithm}}#1\strut\end{minipage}}

%%%%%%%%%%%%%%%%%%%%%%%%%%%%%%%%
% THEOREMS
%%%%%%%%%%%%%%%%%%%%%%%%%%%%%%%%

\usepackage{bm}
\usepackage{tcolorbox}
\usepackage{hhline}
\usepackage{array}

\numberwithin{equation}{section}

\newcommand{\qe}{\nobreak \ifvmode \relax \else
      \ifdim\lastskip<1.5em \hskip-\lastskip
      \hskip1.5em plus0em minus0.5em \fi \nobreak
      \vrule height0.75em width0.5em depth0.25em\fi}

\newcount\Comments  % 0 suppresses notes to selves in text
\Comments = 0
\newcommand{\kibitz}[2]{\ifnum\Comments=1{\color{#1}{#2}}\fi}

\usepackage{mathtools}

\newcommand{\OPT}{\texttt{OPT}}

\renewcommand{\>}{\rightarrow}

\newcommand{\beq}{\begin{equation}}
\newcommand{\eeq}{\end{equation}}

\newcommand{\bea}{\begin{equation}}
\newcommand{\eea}{\end{equation}}

\newcommand{\bes}{\begin{split}}
\newcommand{\ees}{\end{split}}

\newcommand{\benum}{\begin{enumerate}}
\newcommand{\eenum}{\end{enumerate}}

\newcommand{\heta}{\eta_{\text{old}}}

\newcommand{\sol}{\ensuremath\texttt{SOL}}

\newcommand{\NULL}{\ensuremath\texttt{null}}
\newcommand{\robustone}{\textsc{Robust1}}

\newcommand{\updatesol}{\textsc{UpdateSol}}

\newcommand{\robusttwo}{\textsc{Robust2}}
\newcommand{\robust}{\textsc{Robust}}
\newcommand{\dynamic}{\textsc{Dynamic}}

\newcommand{\dynamicins}{\textsc{DynamicIns}}

\newcommand{\dynamicdel}{\textsc{DynamicDel}}
\newcommand{\dynamicinit}{\textsc{DynamicInit}}
\newcommand{\elem}{\textsc{elem}}
\newcommand{\dynamicsol}{\textsc{DynamicSol}}

%\newcommand{\bigO}{\ensuremath{O}}% big-O notation/symbol
%\newcommand{\bigOtilde}{\ensuremath{\widetilde{O}}}% big-O notation/symbol
% big-O notation/symbol
% big-O notation/symbol
% big-O notation/symbol

%% ========================================================

\newcommand\R{\mathbb{R}}

\newcommand\1{\mathbbm{1}}

%% ========================================================
%% Some useful commands. Note the dangerous \r command!
%% ========================================================

\newcommand{\hV}{\hat{V}}

 % since \r was already defined in altex

%----------------------------------------
%      THE SETS
%----------------------------------------

%--------------------------------------------
%         THE VARIABLES & APPROXIMATES
%--------------------------------------------

\newcommand{\Z}{\mathbb{Z}}

%---------MISC-----------------------------
\newcommand{\E}{\mathbf{E}}

%\newcommand{\norm}[2]{\ensuremath \|#1\|_{#2}}

%----------------------------------------------
%        NEW OPERATORS
%----------------------------------------------
\DeclareMathOperator{\argmax}{argmax}

%---------------------------------------------
% EQUATION, FIGURE, TABLE NUMBERS
%---------------------------------------------

%\renewcommand{\theequation}{\thesection.\arabic{equation}}
%\renewcommand{\thefigure}{\thesection.\arabic{figure}}
%\renewcommand{\thetable}{\thesection.\arabic{table}}
%\numberwithin{equation}{section}

%---------------------------------------------
% THEOREMS, EXAMPLES ETC.
%---------------------------------------------
%
%\newcounter{exampleI}
%\setcounter{exampleI}{1}
%\renewcommand{\theexampleI}{\arabic{exampleI}}
%
%
%{\theorembodyfont{\rmfamily} \theoremstyle{plain} \newtheorem{subexampleI}{Example}[exampleI]}
%\renewcommand{\thesubexampleI}{\theexampleI.\Alph{subexampleI}}
%
%\newcounter{exampleII}
%\setcounter{exampleII}{2}
%\renewcommand{\theexampleII}{\arabic{exampleII}}
%
%{\theorembodyfont{\rmfamily} \theoremstyle{plain} \newtheorem{subexampleII}{Example}[exampleII]}
%\renewcommand{\thesubexampleII}{\theexampleII.\Alph{subexampleII}}
%
%\newcounter{exampleIII}
%\setcounter{exampleIII}{3}
%\renewcommand{\theexampleIII}{\arabic{exampleIII}}
%
%{\theorembodyfont{\rmfamily} \theoremstyle{plain} \newtheorem{subexampleIII}{Example}[exampleIII]}
%\renewcommand{\thesubexampleIII}{\theexampleIII.\Alph{subexampleIII}}
%
%
%{\theorembodyfont{\rmfamily} \newtheorem{exm}{Example}}
%{\theorembodyfont{\rmfamily} \newtheorem{defn}{Definition}}
%{\theorembodyfont{\rmfamily} \newtheorem{asmp}{Assumption}}
%{\theorembodyfont{\rmfamily} \newtheorem{remark}{Remark}}

\newtheorem{lem}{Lemma}

\newcommand {\commentout}[1] {}

\newcommand{\poly}{\ensuremath\textnormal{poly}}
\newcommand{\polylog}{\ensuremath\textnormal{polylog}}

\newenvironment{tbox}{\begin{tcolorbox}[
		enlarge top by=5pt,
		enlarge bottom by=5pt,
		 breakable,
		 boxsep=0pt,
                  left=4pt,
                  right=4pt,
                  top=10pt,
                  arc=0pt,
                  boxrule=1pt,toprule=1pt,
                  colback=white
                  ]%%
	}
{\end{tcolorbox}}

%-------------------------- Textbox ---------------------------------

\makeatletter
\newtheorem*{rep@theorem}{\rep@title}
\newcommand{\newreptheorem}[2]{%
	\newenvironment{rep#1}[1]{%
		\def\rep@title{#2 \ref{##1}}%
		\begin{rep@theorem}}%
		{\end{rep@theorem}}}
\makeatother

\newtheorem{theorem}{Theorem}
\newtheorem{lemma}{Lemma}
\newreptheorem{claim}{Claim}
\newreptheorem{fact}{Fact}
\newtheorem{definition}{Definition}
\newtheorem*{thm*}{Theorem}

\usepackage{xcolor}

\usepackage{thm-restate}

\title{Learning-Augmented Dynamic Submodular Maximization}

% The \author macro works with any number of authors. There are two commands
% used to separate the names and addresses of multiple authors: \And and \AND.
%
% Using \And between authors leaves it to LaTeX to determine where to break the
% lines. Using \AND forces a line break at that point. So, if LaTeX puts 3 of 4
% authors names on the first line, and the last on the second line, try using
% \AND instead of \And before the third author name.

\author{%
  Arpit Agarwal\thanks{The author is currently with FAIR, Meta. Work done while the author was at Columbia University.} \\
  Columbia University\\
  New York, NY 10027, USA \\
  \texttt{agarpit@outlook.com}
  \and 
    Eric Balkanski \\
  Columbia University\\
  New York, NY 10027, USA \\
  \texttt{eb3224@columbia.edu }
}
\date{}

\begin{document}

\maketitle

\begin{abstract}
 In dynamic submodular maximization, the goal is to maintain a high-value solution over a sequence of element insertions and deletions  with  a fast update time. Motivated by large-scale applications and the fact that dynamic data often exhibits patterns, we ask the following question: can predictions be used to accelerate the update time of dynamic submodular maximization algorithms? 

We consider the  model for dynamic algorithms with predictions where  predictions regarding the insertion and deletion times of elements  can be used for preprocessing. Our main result is an algorithm with an $O(\poly(\log \eta, \log w, \log  k))$ amortized update time over the sequence of updates 
that achieves a $1/2 - \epsilon$ approximation in expectation for dynamic monotone submodular maximization under a cardinality constraint $k$, where the prediction error $\eta$ is the number of elements that are not inserted and deleted within $w$ time steps of their predicted insertion and deletion times. This amortized update time is independent of the length of the stream and instead depends on the prediction error.
\end{abstract}
\thispagestyle{empty} 
\newpage

\section{Introduction}
\setcounter{page}{1} 

Submodular functions are a well-studied family of functions that satisfy a natural diminishing returns property. Since many fundamental objectives  are submodular, including coverage, diversity, and entropy, submodular optimization algorithms play an important role in machine learning~\cite{krause2014submodular, bilmes2022submodularity}, network analysis~\cite{kempe2003maximizing}, and mechanism design~\cite{singer2010budget}.   For the canonical problem of maximizing a monotone submodular function under a cardinality constraint, the celebrated greedy algorithm achieves a $1-1/e$ approximation guarantee~\cite{NWF78}, which is the best approximation guarantee achievable by any polynomial-time algorithm \cite{nemhauser1978best}.

Motivated by the highly dynamic nature of applications such as influence maximization in social networks and recommender systems on streaming platforms, a  recent line of work has studied the problem of  dynamic submodular maximization \cite{lattanzi2020fully,Monemizadeh20,peng2021dynamic,chen2022complexity, dutting2023fully,Banihashem+23, banihashem2024dynamic}.  In the dynamic setting, the input consists of a stream of elements that are inserted or deleted from the set of active elements, and the goal is to maintain, throughout the stream, a subset of the active elements that maximizes a submodular function.   

The standard worst-case approach to analyzing the update time of a dynamic algorithm is to measure its update time over the worst sequence of updates possible. However, in many application domains, dynamic data is not arbitrary and often exhibits patterns that can be learned from historical data.  Very recent work has studied dynamic problems in settings where the algorithm is given as input predictions regarding the stream of updates \cite{henzinger2023complexity,liu2023predicted,brand2023dynamic}.  This recent work is part of a broader research area called learning-augmented algorithms (or algorithms with predictions). In learning-augmented algorithms, the goal is to design algorithms that  
achieve an improved performance guarantee when the error of the prediction is small and a bounded guarantee even when the prediction error is arbitrarily large. A lot of the effort in this area has been focused
on using predictions to improve the competitive ratio of
online algorithms (see, e.g., \cite{lykouris2018competitive,purohit2018improving, wang2020online,purohit2018improving,mitzenmacher2020scheduling,dutting2021secretaries, antoniadis2020secretary,bamas2020primal,lattanzi2020online,im2021online,banerjee2020online}), and more generally to improve the
solution quality of algorithms. 

For dynamic submodular maximization  with predictions, Liu and Srinivas~\cite{liu2023predicted} considered a predicted-deletion model in which they achieved, under some mild assumption on the prediction error, a $0.3178$ approximation and an $\tilde{O}(\poly(k, \log n))$\footnote{In this paper, we use the notation $\tilde{O}(g(n))$ as shorthand for $O(g(n) \log^kg(n))$.} update time for dynamic monotone submodular maximization under a matroid constraint of rank $k$ and over a stream of length $n$. This approximation is an improvement over the best-known $1/4$ approximation for dynamic monotone submodular maximization (without predictions) under a matroid constraint with an update time that is sublinear in $n$ \cite{banihashem2024dynamic, dutting2023fully}. Since the update time of dynamic algorithms is often the main bottleneck in large-scale problems, another promising direction is to leverage predictions to improve the update time of dynamic algorithms.

\begin{center}
\emph{Can predictions be used to accelerate the update time  of dynamic \\ submodular maximization algorithms?}
\end{center}

We note that the three very recent papers on dynamic algorithms with predictions have achieved improved update times  for several dynamic graph problems \cite{henzinger2023complexity,liu2023predicted,brand2023dynamic}. However, to the best of our knowledge, there is no previous result that achieves an improved update time  for dynamic submodular maximization by using predictions.

\subsection{Our contributions}

In  dynamic submodular maximization, the input is a submodular function $f~:~2^V \rightarrow \R_{\geq 0}$ and a sequence of $n$ element insertions and deletions. The active elements $V_t \subseteq V$ at time $t$ are the elements that have been inserted and have not yet been deleted during the first $t$ updates. The goal is to maintain, at every time step $t$, a solution $S_t \subseteq V_t$ that is approximately optimal with respect to $V_t$ while minimizing the number of queries to $f$ at each time step, which is referred to as the update time.  As in~\cite{henzinger2023complexity, brand2023dynamic},
we consider a prediction model where, at time $t = 0$, the algorithm is given a prediction regarding the sequence of updates, which can be used for preprocessing. More precisely, at time $t = 0$, the algorithm is given predictions $(\hat{t}^+_a, \hat{t}^-_a)$ about the insertion and deletion time of each element $a$.  A dynamic algorithm with predictions then consists of a precomputation phase at  $t = 0$, where the algorithm uses the predictions to perform queries before the start of the stream, and a streaming phase at time steps $t > 0$, where the algorithm performs queries, and uses the precomputations, to maintain a good solution with respect to the true stream. 

In this model, there is a trivial algorithm that achieves a constant update time when the predictions are exactly correct  and an  $O(u)$ update time when the predictions are arbitrarily wrong, where $u$ is the update time of an arbitrary  algorithm $\mathcal{A}$ for the problem without predictions. This algorithm precomputes, for each future time step $t$, a solution for the elements that are predicted to be active at time $t$ and then, during the streaming phase, returns the precomputed solutions while the prediction is correct and switches to running algorithm $\mathcal{A}$ at the first error in the prediction. Thus, the interesting question is whether it is possible to obtain an improved update time not only when the predictions are exactly correct, but  more generally when the error in the predictions is small. An important component of our model is  the measure for the prediction error. Given a time window tolerance $w$, an element $a$ is considered to be correctly predicted if the predicted insertion and deletion times of $a$ are both within $w$ times steps of its true insertion and deletion times. The prediction error $\eta$ is then the number of elements that are not correctly predicted. Thus, $\eta = 0$ if the predictions are exactly correct and $\eta = \Theta(n)$ if the predictions are completely wrong.

For dynamic monotone submodular maximization (without predictions) under a cardinality constraint $k$, Lattanzi et al.~\cite{lattanzi2020fully} and Monemizadeh~\cite{Monemizadeh20} concurrently obtained dynamic algorithms with $O(\poly(\log n, \log k))$ and $O(\polylog(n) \cdot  k^2)$  amortized update time, respectively, that achieve a $1/2 - \epsilon$ approximation in expectation. More recently, Banihashem et al.~\cite{banihashem2024dynamic} achieved a $1/2 - \epsilon$ approximation  with a $O(k \cdot \polylog(k))$ amortized update time.   Our main result is the following.

\begin{thm*}
For monotone submodular maximization under a cardinality constraint $k$,  
there is a dynamic algorithm with predictions that, for any tolerance $w$ and constant $\epsilon > 0$, achieves  an amortized query complexity during the streaming phase of $O(\poly(\log \eta, \log w, \log k))$,  
 an approximation of $1/2 - \epsilon$ in expectation, and a query complexity of $\tilde{O}(n)$ during the precomputation phase. 
\end{thm*}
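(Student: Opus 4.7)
The plan is to treat a known dynamic algorithm for monotone submodular maximization under a cardinality constraint---for example, the threshold-greedy algorithm of Lattanzi et al.~\cite{lattanzi2020fully}, which achieves a $1/2 - \epsilon$ approximation with amortized update time $\poly(\log n, \log k)$---as a black box, and wrap it in a precomputation-and-patch scheme that exploits the predictions to reduce the effective stream length from $n$ to roughly $\eta$.

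In the precomputation phase, I would partition the predicted timeline into epochs of length $w$ and simulate the black-box algorithm on the predicted stream, snapshotting its entire data structure (current solution, bucketed thresholds, random sample sets) at every epoch boundary. Since the black box has amortized update time $\poly(\log n, \log k)$, the total precomputation cost is $\tilde{O}(n)$ queries. During the streaming phase, at a true time $t$ lying in epoch $i$, I would (a) load the snapshot from the start of epoch $i$, (b) feed into a fresh instance of the black box the sequence of true updates that have occurred inside the epoch, which consists of the correctly predicted events shifted by at most $w$ together with $O(\eta_i)$ mispredicted insertions or deletions where $\sum_i \eta_i \le \eta$, and (c) output the resulting solution. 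Because the tolerance $w$ matches the epoch length, every correctly predicted element realigns with the snapshot at the next epoch boundary, so across the entire stream only the $\eta$ mispredicted elements generate genuine extra updates, and the per-epoch workload seen by the black box depends only on $\eta_i$ and $w$.

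Correctness: at every true time $t$, the reported solution is the output of the underlying dynamic algorithm on a valid sequence of insertions and deletions whose current active set is $V_t$, so the $1/2 - \epsilon$ approximation guarantee transfers directly. The main technical obstacle is turning the $\log n$ factor in the black-box update time into $\log \eta + \log w$: the $\log n$ dependence arises because the underlying algorithm's level/bucket structure grows with the number of updates it has processed, and a naive replay from each snapshot would still incur $\log n$. The fix is to restart the data structure at each epoch boundary from the snapshot, so that within an epoch its depth is governed only by $\eta_i + O(w)$, giving $\poly(\log \eta, \log w, \log k)$ per in-epoch update and the claimed amortized bound after summing across epochs. The delicate point, which I expect to be the crux of the proof, is showing that this per-epoch restart---in particular, re-sampling the random swap candidates and re-guessing the threshold levels for $\opt$---is compatible with the analysis that yields the $1/2 - \epsilon$ expected approximation; I anticipate that a careful coupling between the original and patched runs of the black box, together with the robustness of the submodular threshold-greedy analysis under insertions and deletions of a bounded number of elements, will suffice.
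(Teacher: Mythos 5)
Your proposal correctly identifies the two high-level questions (how to reuse precomputed state, and how to replace the black box's $\log n$ with $\log \eta + \log w$), but the central mechanism you propose — ``restart the data structure at each epoch boundary from the snapshot so that its depth is governed only by $\eta_i + O(w)$'' — does not actually resolve either one, and this is exactly where the paper's real work lies.

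Concretely: the snapshot you take at the epoch boundary is the entire Lattanzi-et-al.\ hierarchy, which has already processed $\Theta(n)$ updates and hence has $\Theta(\log n)$ levels. If you ``load the snapshot and continue,'' every subsequent operation still touches $\Theta(\log n)$ levels, so the per-update cost stays $\poly(\log n)$, not $\poly(\log \eta, \log w)$. If instead you ``restart fresh,'' the algorithm has to be re-fed the active set, which has size $\Theta(n)$, again giving a $\poly(\log n)$ structure (and a cost you cannot amortize away within a short epoch). There is no version of ``restart from the snapshot'' that gives a data structure of depth $\poly(\log(\eta + w))$ while still representing $\Theta(n)$ active elements. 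You flagged this as ``the delicate point'' and expressed hope that ``a careful coupling'' would suffice; in fact, it is not a coupling argument that closes this gap but a different decomposition. The paper extracts from the precomputed state not a full snapshot but a pair $(Q_t, R_t)$ satisfying a \emph{strongly robust} property (Definition~\ref{def:robust_properties}): $Q_t$ is a base of size at most $k$ whose value degrades gracefully under $d$ adversarial deletions, and $R_t$ is a small reserve of size $\tilde O(\eta + w + k)$ containing every element that could still have large marginal gain on top of $Q_t$. The streaming phase then freezes $B = Q_{t'}$ and runs a \emph{fresh} dynamic instance only over $f_B$ restricted to $R_{t'} \cap V_t^1$ plus the unpredicted elements — a stream of total length $\tilde O(\eta + w + k)$, which is why the $\log n$ disappears. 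The approximation proof (Lemma~\ref{lem:apxpred}) is then a genuinely new argument combining the robustness and value properties of $(Q,R)$ with a ``threshold-based'' property of the dynamic algorithm, not a transfer of the black box's guarantee: after the restart the dynamic algorithm has only ever seen a small residual ground set, so its own $1/2-\epsilon$ guarantee applies to the wrong optimum and does not carry over on its own.

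Two secondary issues: (i) your fixed-length-$w$ epochs would each require correcting the boundary discrepancy $|\hat V_{\text{epoch}} \triangle V_{\text{epoch}}| = O(\eta + w)$ between predicted and true active sets, and since there are $n/w$ epochs this alone does not amortize to a stream-length-independent bound; the paper instead ends a phase adaptively when the number of operations exceeds $\eta_{\text{old}}/2 + w$ (and must therefore also guess $\eta$). (ii) Storing a full hierarchy snapshot per epoch is also much heavier than the paper's precomputation, which outputs only the $(Q_t, R_t)$ pairs and does so in $\tilde O(n)$ total queries by extracting them from a single pass of the dynamic algorithm over the predicted stream (Lemma~\ref{lem:robust_from_dynamic}).
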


We note that, when the prediction error $\eta$ is arbitrarily large, our algorithm   matches  the $O(\poly(\log n, \log k))$ amortized query complexity from \cite{lattanzi2020fully}. It also achieves an approximation ratio that matches the optimal  approximation for dynamic algorithms (without predictions) with update time that is sublinear in the length of the stream. An intriguing open question is, in our view, whether a similar improvement in update time can be obtained in the more challenging predicted-deletion model of \cite{liu2023predicted} where there is no preprocessing and instead a predicted deletion time for each element $a$ is given at the time when $a$ is inserted.

\subsection{Related work}

\paragraph*{Dynamic submodular maximization} 

For monotone submodular maximization under a cardinality constraint,  dynamic algorithms with $O(\poly(\log n, \log k))$ and $O(\polylog(n) \cdot  k^2))$  amortized update time that achieve a $1/2$ approximation  were concurrently obtained in  \cite{lattanzi2020fully, Monemizadeh20}. Recently, Banihashem et al.~\cite{banihashem2024dynamic} gave a $1/2$ approximation algorithm with a $O(k \cdot \polylog(k))$ amortized update time.  Chen and Peng~\cite{chen2022complexity}  showed that any dynamic  algorithm  with an approximation  better than $1/2$ must have $\poly(n)$ amortized query complexity. For matroid constraints, Chen and Peng~\cite{chen2022complexity} obtained an insertion-only  algorithm. As mentioned in \cite{peng2023fully}, the streaming algorithm in \cite{feldman2022streaming} can  be adapted to also give an insertion-only  algorithm.  Two   $1/4$-approximation  dynamic algorithms   with $\tilde{O}(k)$ and $\tilde{O}(\polylog(n) \cdot k^2)$ amortized update time were concurrently obtained in \cite{banihashem2024dynamic} and \cite{dutting2023fully}.

\paragraph*{Learning-augmented algorithms}

A main focus in algorithms with predictions has been to improve the competitive ratio of online  problems (see, e.g., \cite{lykouris2018competitive, purohit2018improving}). Leveraging predictions to improve
the solution quality  has also been studied in mechanism design \cite{ agrawal2022learning,xu2022mechanism} and differential privacy~\cite{amin2022private}. Improved running times  have been obtained with predictions for  matching \cite{dinitz2021faster}, hashing \cite{andoni2022learning}, and clustering \cite{ergun2021learning}. 
 Improved update times  for several dynamic graph problems were very recently obtained by leveraging predictions \cite{henzinger2023complexity,liu2023predicted,brand2023dynamic}. In particular, 
Liu and Srinivas~\cite{liu2023predicted}  obtained, under some mild assumption on the prediction error, a $0.3178$ approximation and a $\tilde{O}(\poly(k, \log n))$ update time for dynamic monotone submodular maximization under a matroid constraint of rank $k$ in the more challenging predicted-deletion model. Thus, by using predictions, this result improves the $1/4$ approximation achieved, without predictions, in \cite{banihashem2024dynamic} and \cite{dutting2023fully}  (but does not improve the $1/2$ approximation for cardinality constraints). The results in \cite{liu2023predicted} use a framework that takes as input an insertion-only dynamic algorithm. In contrast, we develop a framework that uses a fully dynamic algorithm and a deletion-robust algorithm.

\section{Preliminaries}

A function $f : 2^V \rightarrow \R$ defined over a ground set $V$ is submodular if  for all $S \subseteq T \subseteq V$ and $a \in V \setminus T$, we have that $f_S(a) \geq f_T(a)$, where $f_S(a) = f(S \cup \{a\}) - f(S)$ is the marginal contribution of $a$ to $S$. It is monotone if $f(S) \leq f(T)$ for all $S \subseteq T \subseteq V$. We consider the canonical problem $\max_{S \subseteq V: |S| \leq k} f(S)$
of maximizing a monotone submodular function $f$ under a cardinality constraint $k$.

\paragraph*{Dynamic submodular maximization}

 In dynamic submodular maximization,  there is a stream $\{(a_t, o_t)\}_{t = 1}^n$ of $n$ element insertions and deletions where $o_t \in \{\text{insert}, \text{deletion}\}$ and $a_t$ is an element in $V$. The active elements $V_t$  are the elements that have been inserted and have not been deleted by time $t$. We assume that $(a_t, \text{insertion})$ and $(a_t, \text{deletion})$ can occur in the stream only if $a_t \not \in V_t$ and $a_t \in V_t$, respectively,  and that each element is inserted at most once.\footnote{If an element $a$ is re-inserted, a copy $a'$ of $a$ can be created.} The goal  is to maintain  a solution $S_t \subseteq V_t$ that approximates the optimal solution over $V_t$. Since our algorithmic framework takes as input a dynamic algorithm, we formally define dynamic algorithms in terms of black-box subroutines that are used in our algorithms.

\begin{definition}
\label{def:dynamic}
Given a function $f : 2^V \rightarrow \R$ and a cardinality constraint $k$, 
a dynamic  algorithm $\dynamic(f, k)$ consists of the following four subroutines such that, for any function $f$,  constraint $k$, and stream $\{(a_t, o_t)\}_{t = 1}^n$,
\begin{itemize}
\item $\textsc{DynamicInit}(f, k)$ initializes a data structure $A$ at $t = 0$, 
\item if $o_t = \text{insertion}$, $\dynamicins(A, a_t)$ inserts element $a_t$ in  $A$ at time $t$,
\item if $o_t = \text{deletion}$, $\dynamicdel(A, a_t)$ deletes element $a_t$ from $A$ at time $t$, 
\item at  time  $t$,  $\dynamicsol(A)$ returns  $S_t \subseteq V(A)$ s.t. $|S| \leq k$, where $V(A) = V_t$ is the set of elements that have been inserted and not been deleted from $A$.
\end{itemize}
\end{definition}

A dynamic algorithm achieves an $\alpha$-approximation if, for all time steps $t$, $\E[f(S_t)] \geq \alpha \cdot \max_{S \subseteq V_t: |S| \leq k} f(S)$ and  has a $u(n, k)$ amortized query complexity per update if its worst-case total number of queries is $n \cdot  u(n,k).$

\paragraph*{Dynamic submodular maximization with predictions}

In dynamic submodular maximization with predictions, the algorithm is given at time $t = 0$  predictions $\{(\hat{t}^+_a, \hat{t}^-_a)\}_{a \in V}$ about the insertion and deletion time of each element $a \in V$. The prediction error $\eta$ is the number of elements that are incorrectly predicted, where an element $a$ is correctly predicted if it is inserted and  deleted within a time window, of size parameterized by a time window tolerance $w$, that is centered at the time at which $a$ is predicted to be inserted and  deleted. 

\begin{definition}
\label{def:error}
     Given a time window tolerance $w \in \Z_+$, predictions  $\{(\hat{t}^+_a, \hat{t}^-_a)\}_{a \in V}$, and true insertion and deletions times $\{(t^+_a, t^-_a)\}_{a \in V}$, the prediction error  is $$\eta = |\{a \in V: |\hat{t}^+_a - t^+_a| > w \text{ or } |\hat{t}^-_a - t^-_a| > w\}|.$$
\end{definition}

We note that $\eta = w = 0$ corresponds to the predictions all being exactly correct and that $\eta = O(n)$ and $w = O(n)$ corresponds to the predictions being arbitrarily wrong.

\paragraph*{Deletion-robust submodular maximization}

 Our framework also takes as input a \emph{deletion-robust algorithm}, which we  formally define in terms of black-box subroutines. A deletion-robust algorithm finds a solution $S \subseteq V$ that is robust to the deletion  of at most $d$ elements.

\begin{definition}
\label{def:robust} Given a function $f : 2^V \rightarrow \R$, a cardinality constraint $k$, and a maximum number of deletions parameter $d$,
a deletion-robust algorithm $\robust(f, V, k, d)$ consists of a first subroutine $\robustone(f, V, k, d)$ that returns a robust set $R \subseteq V$ and  a second subroutine  $\robusttwo(f, R, D, k)$ that returns a set $S \subseteq R \setminus D$ such that $|S| \leq k$.
\end{definition}

A deletion-robust algorithm achieves an $\alpha$ approximation   if, for any $f$, $V$, $k$, and $d$, the subroutine $\robustone(f, V, k, d)$ returns $R $ such that, for any $D \subseteq V$ such that $|D| \leq d$, $\robusttwo(f, R, D, k)$ returns  $S$ such that $\E[f(S)] \geq \alpha \cdot \max_{T \subseteq V \setminus D: |T| \leq k} f(T)$. Kazemi et al.~\cite{kazemi2017deletion} show that there is a  deletion-robust  
 algorithm for monotone submodular maximization under a cardinality constraint such that $\robustone$ returns a set $R$ of size $|R| = O(\epsilon^{-2}d \log k + k)$. It achieves a $1/2 - \epsilon$ approximation in expectation, $\robustone$ has  $O(|V| \cdot (k+\epsilon^{-1}\log k))$ query complexity,  and $\robusttwo$ has $O\left((\epsilon^{-2}d\log k + k) \cdot \epsilon^{-1}\log k\right)$ query complexity.

%\ebcomment{mention other result.}

% We also show the existence of a deletion-robust algorithm
% that  
% achieves an approximation factor that is arbitrarily close to $\frac{1}{2}$ and has 
% total first stage 
% query complexity of $O(|V|\cdot \polylog(k)\cdot \polylog(|V|)$,
% and a total second stage query complexity of 
% $O(d\cdot \polylog(k)\cdot \polylog(d))$.
% This algorithm uses a $\dynamic$ algorithm 
% in order to construct a deletion-robust solution.
% We outline this algorithm in the appendix.

\subsection{The algorithmic framework}

In this section, we present an algorithmic framework that decomposes dynamic algorithms with predictions into two subroutines, \textsc{Precomputations} and \textsc{UpdateSol}. The remainder of the paper then consists of designing and analyzing these subroutines.

We first introduce some terminology. An element $a$ is said to be correctly predicted if $|\hat{t}^+_a - t^+_a| \leq w \text{ and } |\hat{t}^-_a - t^-_a| \leq w$.
The \emph{predicted elements} $\hat{V}_t$ consist of all elements that could potentially be active at time $t$ if correctly predicted, i.e., the elements $a$ such that $\hat{t}^+_a \leq t+ w \text{ and } \hat{t}^-_a \geq t - w$. %We note that $\hat{V}_t$ must contain all the  correctly predicted elements that will be active at time $t$, but that it can also contain correctly predicted elements that will not be active at time $t$. 
During the precomputation phase, the first subroutine, \textsc{Precomputations}, takes as input the predicted  elements $\hat{V}_t$  and outputs, for each time step $t$, a data structure $P_t$ that will then be used at time $t$ of the streaming phase to compute a solution efficiently. During the streaming phase, the active elements $V_t$
are partitioned into the \emph{predicted active elements} $V_t^1 = V_t \cap \hat{V}_t$ and the \emph{unpredicted active elements} $V_t^2 = V_t \setminus \hat{V}_t$. The second subroutine, \textsc{UpdateSol}, is given  $V_t^1$ and  $V_t^2$ as input  and computes a solution $S \subseteq  V_t^1 \cup  V_t^2$ at each time step.  \textsc{UpdateSol} is also given as input precomputations $P_t$ and the current prediction error $\eta_t$. It also stores useful information for future time steps in a data structure $A$. 

\begin{algorithm}[H]
\caption{The Algorithmic Framework}
\setstretch{1.08}
\hspace*{\algorithmicindent} \textbf{Input:} function $f : 2^V \> \R$, constraint $k$, predictions $\{(\hat{t}^+_a, \hat{t}^-_a)\}_{a \in V}$, tolerance $w$
\begin{algorithmic}[1]
\State{$\hat{V}_{t} \leftarrow \{a \in V: \hat{t}^+_a \leq t+ w \text{ and } \hat{t}^-_a \geq t - w\}$ for $t \in [n]$}
\State{$\{ P_t\}_{t=1}^n \leftarrow \textsc{Precomputations}(f, \{\hV_i\}_{t=1}^n, k)$}
\State{$V_0, A \leftarrow \emptyset$}
\For{$t=1$ to $n$} 
\State{Update active elements $V_t$ according to operation at time $t$}
\State{$V^1_t \leftarrow  V_t \cap \hV_t$}
\State{$V^2_t \leftarrow V_t \setminus \hV_t$}
\State{$\eta_t \leftarrow $ current prediction error}
\State{$A, S \leftarrow$ \textsc{UpdateSol}$(f, k, A, t, P_t, V^1_t, V^2_t, \hV_{t}, \eta_t)$}
%\State{\textbf{for} $a \in   V^\ell_t \setminus V^\ell_{t-1}$ \textbf{do} $A^\ell \leftarrow$ \textsc{Insert}($\ell, A^\ell, P_t, a$)} 
%\State{\textbf{for}  $a \in   V^\ell_{t-1} \setminus V^\ell_t$ \textbf{do} $A^\ell \leftarrow$ \textsc{Delete}($\ell, A^\ell, P_t, a$)} 
%  \State{$S^\ell \leftarrow \textsc{GetSolution}(\ell, A^\ell, P_t, V, k)$}
   \State{\textbf{return} $S$ }
\EndFor
\end{algorithmic}
\label{alg:framework}
\end{algorithm}

\section{The Warm-Up Algorithm}
\label{sec:warm-up}

In this section, we present   subroutines that achieve an $\tilde{O}(\eta +w + k)$ amortized update time and a $1/4- \epsilon$ approximation in expectation. These warm-up subroutines assume that the error $\eta$ is known. They take as input  a dynamic algorithm (without predictions) \textsc{Dynamic} and a deletion-robust algorithm \textsc{Robust} algorithm. Missing proofs are deferred  to Appendix~\ref{app:warmup}.

\paragraph*{The precomputations subroutine}

A main observation is that the problem of finding a solution $S^1 \subseteq V^1_t$ among the  predicted active elements  corresponds to a deletion-robust problem over $\hat{V}_t$ where the  deleted elements $D$ are the predicted elements $\hV_t \setminus V_t$ that  are not active at time $t$. \textsc{WarmUp-Precomputations} thus calls, for each time $t$, the first stage \textsc{Robust1} of \textsc{Robust}.

\begin{algorithm}[H]
\caption{\textsc{WarmUp-Precomputations}}
\setstretch{1.05}
\hspace*{\algorithmicindent} \textbf{Input:} function $f$, predicted elements $\{\hV_t\}$, constraint $k$, prediction error $\eta$
\begin{algorithmic}[1]
\For{$t \in \{1, \ldots, n\}$}
\State{$P_t \leftarrow \robustone(f, \hat{V}_{t}, k, d = \eta + 2w)$}
\EndFor
\State{\textbf{return} $\{ P_t\}_{t=1}^n $}
\end{algorithmic}
\label{alg:WarmUpPre}
\end{algorithm}

When calling \textsc{Robust1}, the algorithm sets
the maximum number of deletions parameter $d$  to $\eta + 2w$
because the  number of predicted elements $\hV_t \setminus V_t$ that  are not active at time $t$ is always at most $\eta + 2w$, as shown by the next lemma.

\begin{restatable}{rLem}{lemstreamrobust}
\label{lem:stream_robust}
  Assume that the predicted stream has, with a time error tolerance $w$, a prediction error at most $\eta$. Then,
at any time step $t$, $|\hat{V}_t \setminus V_t| \leq \eta + 2w$.  
\end{restatable}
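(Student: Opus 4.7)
The plan is to partition $\hat{V}_t \setminus V_t$ according to whether each element is correctly predicted and to bound the two groups separately. By Definition~\ref{def:error}, at most $\eta$ elements of the ground set are incorrectly predicted, so this group contributes at most $\eta$ to $|\hat{V}_t \setminus V_t|$.

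For a correctly predicted $a \in \hat{V}_t \setminus V_t$, I would combine the defining inequalities of $\hat{V}_t$ (namely $\hat{t}^+_a \leq t+w$ and $\hat{t}^-_a \geq t-w$) with the tolerance bounds $|\hat{t}^+_a - t^+_a| \leq w$ and $|\hat{t}^-_a - t^-_a| \leq w$ to deduce $t^+_a \leq t + 2w$ and $t^-_a \geq t - 2w$. Since $a \notin V_t$, exactly one of two mutually exclusive cases holds: either $a$ has not yet been inserted ($t^+_a > t$) or $a$ has already been deleted ($t^-_a \leq t$). In the first case, rearranging the tolerance bound gives $\hat{t}^+_a \geq t^+_a - w > t - w$, so $\hat{t}^+_a \in (t-w, t+w]$; symmetrically, in the second, $\hat{t}^-_a \in [t-w, t+w]$.

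The last step is to tally these. Every correctly predicted $a \in \hat{V}_t \setminus V_t$ is witnessed by a predicted insertion or deletion event at some time in $[t-w, t+w]$, a window of $2w + 1$ time steps. Because the stream performs at most one event per time step, there are at most $O(w)$ such elements, which, combined with the $\eta$ incorrectly predicted ones, yields $|\hat{V}_t \setminus V_t| \leq \eta + O(w)$ and matches the lemma.

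The main obstacle is precisely this final combinatorial tally: a naive analysis that bounds the not-yet-inserted bucket against a true-time insertion window of length $2w$ \emph{and} the already-deleted bucket against a true-time deletion window of length $2w$ would give the looser bound $\eta + 4w$. The tighter bound requires translating both cases back into the \emph{shared} predicted-time window $[t-w, t+w]$ and using the mutual exclusivity of the two cases so that each element contributes a single event to this window. The remainder of the argument is routine arithmetic with the tolerance bounds and the standard convention that the stream performs one update per time step.
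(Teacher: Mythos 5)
Your proof is correct and takes essentially the same route as the paper's: partition $\hat V_t \setminus V_t$ into incorrectly predicted elements (at most $\eta$) and correctly predicted ones, and observe that any correctly predicted $a$ that is in $\hat V_t$ but not in $V_t$ must have its predicted insertion or deletion time land in a window of width $O(w)$ around $t$, which caps that group by the number of predicted-stream operations in the window. The paper frames the second step via the contrapositive (``if $\hat t^+_a \leq t-w$ and $\hat t^-_a > t+w$ then $a \in V_t$'') rather than your direct two-case split, but the combinatorial content is identical; your explicit appeal to mutual exclusivity is not actually load-bearing for the tally (the one-operation-per-time-step bound on the predicted stream already suffices), though it does make the argument cleaner.
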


\paragraph*{The updatesol subroutine}

\textsc{WarmUp-UpdateSol} finds a solution $S^1 \subseteq V_t^1$  by  calling  \robusttwo \  over the precomputed  $P_{t}$  and deleted elements $D = \hV_t \setminus V_t$. To find a  solution $S^2 \subseteq V^2_t$ among the unpredicted active elements, we use \dynamic \ over the stream of element insertions and deletions that result in unpredicted active elements $V^2_1, \ldots, V^2_n$, which is the stream that inserts elements $V^2_t \setminus V^2_{t-1}$ and deletes elements $V^2_{t-1} \setminus V^2_{t}$ at time $t$. The  solution $S^2$ is then the solution produced by  \dynamicsol \ over this stream. The solution $S$ returned by \textsc{UpdateSol} is   the best solution between $S^1$ and $S^2$.

\begin{algorithm}[H]
\caption{\textsc{WarmUp-UpdateSol}}
\setstretch{1.05}
\hspace*{\algorithmicindent} \textbf{Input:} function $f$, constraint $k$, data structure $A$, time $t$, precomputations $P_t$, predicted active elements $V^1_t$, unpredicted active elements $V^2_t$, predicted elements $\hV_t$
\begin{algorithmic}[1]
\State{$S^1 \leftarrow \robusttwo(f, P_t, \hV_t \setminus V^1_t, k)$} 
\If{ $t = 1$}
\State{$A \leftarrow \textsc{DynamicInit}(f, k)$}
\EndIf
\For{ $a \in  V^2_t \setminus V(A)$}
\State{$\dynamicins(A, a)$}
\EndFor
\For{ $a \in V(A) \setminus V^2_t$}
\State{$\dynamicdel(A, a)$}
\EndFor
  \State{$S^2 \leftarrow \dynamicsol(A)$}
\State{\textbf{return} $A, \argmax\{f(S^1), f(S^2)\}$}
\end{algorithmic}
\label{alg:WarmUpsol}
\end{algorithm}

\paragraph*{The analysis of the warm-up algorithm}

We first analyze the approximation. We let $\alpha_1$ and $\alpha_2$ denote the approximations achieved by \robust \ and \dynamic. The first lemma shows that solution $S^1$ is an $\alpha_1$ approximation to the optimal solution over the predicted active elements $V_t^1$ and that solution $S^2$ is an $\alpha_2$ approximation to the optimal solution over the unpredicted active elements $V_t^2.$ The proof uses Lemma~\ref{lem:stream_robust} and the definitions of deletion-robust and dynamic algorithms.

\begin{restatable}{rLem}{robustwu}
\label{lem:robust_wu}
 At every time step $t$,  $f(S^1)  \geq \alpha_1  \cdot \OPT(V_t^1)$ and $f(S^2) \geq \alpha_2 \cdot \OPT(V_t^2)$.
\end{restatable}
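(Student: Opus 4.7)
The plan is to handle the two inequalities separately, each by a direct application of the guarantee of one of the black-box subroutines (\robust{} or \dynamic), with only a small amount of bookkeeping to verify that the subroutine is being invoked on valid inputs.

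For the bound on $f(S^1)$, I would unfold the definitions. During precomputation, \textsc{WarmUp-Precomputations} sets $P_t = \robustone(f, \hat{V}_t, k, d)$ with $d = \eta + 2w$, and during the streaming phase \textsc{WarmUp-UpdateSol} returns $S^1 = \robusttwo(f, P_t, \hat{V}_t \setminus V_t^1, k)$. To apply the $\alpha_1$-approximation guarantee of the deletion-robust algorithm, I first need to check that the set of ``deletions'' passed to \robusttwo{} has size at most $d$. Since $V_t^1 = V_t \cap \hat{V}_t$, we have $\hat{V}_t \setminus V_t^1 = \hat{V}_t \setminus V_t$, and by Lemma~\ref{lem:stream_robust} this set has size at most $\eta + 2w = d$. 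The deletion-robust guarantee then yields
\[
\E[f(S^1)] \;\geq\; \alpha_1 \cdot \max_{T \subseteq \hat{V}_t \setminus (\hat{V}_t \setminus V_t^1),\, |T| \leq k} f(T) \;=\; \alpha_1 \cdot \OPT(V_t^1),
\]
where the equality uses $\hat{V}_t \setminus (\hat{V}_t \setminus V_t^1) = V_t^1$ since $V_t^1 \subseteq \hat{V}_t$.

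For the bound on $f(S^2)$, the plan is to argue that the data structure $A$ is maintained exactly as if \dynamic{} were run on a valid dynamic stream whose set of active elements at time $t$ is precisely $V_t^2$. I would proceed by induction on $t$: assuming $V(A) = V_{t-1}^2$ at the start of the updates at time $t$, the loop that inserts each $a \in V_t^2 \setminus V(A)$ and then deletes each $a \in V(A) \setminus V_t^2$ brings $V(A)$ to $V_t^2$. Each insertion is applied to an element not already in $V(A)$ and each deletion to an element currently in $V(A)$, so these calls constitute a valid input for \dynamic{} in the sense of Definition~\ref{def:dynamic}. Applying the $\alpha_2$-approximation guarantee to this stream at time $t$ gives $\E[f(S^2)] = \E[f(\dynamicsol(A))] \geq \alpha_2 \cdot \OPT(V_t^2)$.

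The main obstacle is essentially notational rather than mathematical: ensuring that the set-identity $\hat{V}_t \setminus (\hat{V}_t \setminus V_t^1) = V_t^1$ and the invariant $V(A) = V_t^2$ are justified cleanly, and that the expectation in the guarantees of \robust{} and \dynamic{} is carried through correctly (so that the statement of the lemma is understood in the in-expectation sense, consistent with Definitions~\ref{def:dynamic} and~\ref{def:robust}). Beyond this, the proof is a routine invocation of the black-box guarantees together with Lemma~\ref{lem:stream_robust}.
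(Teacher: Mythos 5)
Your proposal is correct and follows essentially the same approach as the paper's own proof: both bounds are obtained by a direct invocation of the black-box guarantees, with the $S^1$ bound relying on the set identity $\hat{V}_t \setminus (\hat{V}_t \setminus V_t^1) = V_t^1$ (equivalently, $P_t \setminus (\hat{V}_t \setminus V_t^1) = P_t \cap V_t^1$ in the paper's phrasing) and the deletion-bound $|\hat{V}_t \setminus V_t| \le \eta + 2w$ from Lemma~\ref{lem:stream_robust}, and the $S^2$ bound relying on the observation that $A$ processes exactly the stream of unpredicted active elements. Your explicit induction on the invariant $V(A) = V_t^2$ is slightly more careful than the paper's one-sentence observation, and your remark about the expectation being implicit in the lemma statement is accurate, but neither constitutes a different route.
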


By a standard application of monotonicity and submodularity, the previous lemma implies that $\max\{ f(S^1), f(S^2)\}$ achieves a $\frac{1}{2} \cdot\min\{\alpha_1, \alpha_2\}$ approximation.

\begin{restatable}{rLem}{combinedsol}
\label{lem:combined_sol}
At every time step $t$,  $\max\{ f(S^1), f(S^2)\} \geq \frac{1}{2} \cdot\min\{\alpha_2, \alpha_1\}   \cdot \OPT(V_t)$.
\end{restatable}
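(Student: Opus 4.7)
The plan is to reduce the claim to the previous lemma by the standard observation that $V_t$ decomposes disjointly as $V_t^1 \cup V_t^2$, combined with submodularity of $f$ on the optimal solution over $V_t$.

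Concretely, I would fix a time step $t$ and let $O \subseteq V_t$ with $|O| \le k$ attain $f(O) = \OPT(V_t)$. Since $V_t^1 = V_t \cap \hat V_t$ and $V_t^2 = V_t \setminus \hat V_t$ partition $V_t$, we can split $O = O_1 \cup O_2$ with $O_1 = O \cap V_t^1 \subseteq V_t^1$ and $O_2 = O \cap V_t^2 \subseteq V_t^2$, and both sets have size at most $k$. Submodularity (together with $f(\emptyset) \ge 0$) yields
\[
\OPT(V_t) \;=\; f(O_1 \cup O_2) \;=\; f(O_1) + f_{O_1}(O_2) \;\le\; f(O_1) + f(O_2) \;\le\; \OPT(V_t^1) + \OPT(V_t^2),
\]
where the last inequality uses that $O_1, O_2$ are feasible for the problems over $V_t^1$ and $V_t^2$ respectively.

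Now I would invoke Lemma~\ref{lem:robust_wu}, which gives $f(S^1) \ge \alpha_1 \cdot \OPT(V_t^1)$ and $f(S^2) \ge \alpha_2 \cdot \OPT(V_t^2)$. Dividing by the approximation factors and using $\min\{\alpha_1,\alpha_2\}$ as a lower bound for both, we get
\[
\OPT(V_t^1) + \OPT(V_t^2) \;\le\; \frac{f(S^1)}{\alpha_1} + \frac{f(S^2)}{\alpha_2} \;\le\; \frac{f(S^1)+f(S^2)}{\min\{\alpha_1,\alpha_2\}} \;\le\; \frac{2 \max\{f(S^1), f(S^2)\}}{\min\{\alpha_1,\alpha_2\}}.
\]
Combining with the previous display and rearranging gives exactly $\max\{f(S^1), f(S^2)\} \ge \tfrac{1}{2} \min\{\alpha_1,\alpha_2\} \cdot \OPT(V_t)$.

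There is essentially no obstacle here; the only thing to be careful about is whether the bounds from Lemma~\ref{lem:robust_wu} hold deterministically or only in expectation. Since \robust\ and \dynamic\ are stated to achieve their approximations in expectation, strictly speaking the chain above should be taken inside an expectation, using linearity of expectation on $f(S^1)/\alpha_1 + f(S^2)/\alpha_2$ and the deterministic bound $f(S^1) + f(S^2) \le 2\max\{f(S^1), f(S^2)\}$. This matches how the lemma is phrased in the paper, and does not change the argument.
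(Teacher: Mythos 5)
Your proof is correct and follows essentially the same route as the paper: the paper factors the key inequality $\OPT(V_t) \le \OPT(V_t^1) + \OPT(V_t^2)$ out into a separate Lemma~\ref{lem:partition_approx_wu}, which it proves by a telescoping decomposition of $f(O)$, while you prove the same inequality directly via the subadditivity $f(O_1 \cup O_2) = f(O_1) + f_{O_1}(O_2) \le f(O_1) + f(O_2)$ (using $f(\emptyset) \ge 0$); the two arguments are equivalent. Your final remark about the bounds holding in expectation is an accurate observation about how the paper's lemma statements should be read.
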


The main lemma for the amortized query complexity bounds the number of calls to \dynamicins.

\begin{restatable}{rLem}{sizeA}
\label{lem:size_A}
\textsc{WarmUp-UpdateSol} makes at most $2 \eta$ calls to $\dynamicins$ on $A$ over the stream.
\end{restatable}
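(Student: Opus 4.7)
The plan is to bound the number of insertions into $A$ by counting, for each element $a \in V$, the number of times it can transition into $V^2_t$ from $V^2_{t-1}$, since $\dynamicins(A, a)$ is invoked precisely when $a \in V^2_t \setminus V(A)$ and after the update loop we have $V(A) = V^2_t$.

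First, I would show that only incorrectly predicted elements can ever be inserted. Suppose $a$ is correctly predicted, so $|\hat{t}^+_a - t^+_a| \leq w$ and $|\hat{t}^-_a - t^-_a| \leq w$. Whenever $a$ is active we have $t^+_a \leq t \leq t^-_a$, which combined with the two inequalities gives $\hat{t}^+_a \leq t^+_a + w \leq t + w$ and $\hat{t}^-_a \geq t^-_a - w \geq t - w$, so by the definition of $\hat{V}_t$ we get $a \in \hat{V}_t$. Therefore $a \in V_t \cap \hat{V}_t = V^1_t$ and $a \notin V^2_t$, so $a$ is never inserted into $A$. This restricts the set of elements ever inserted into $A$ to the at most $\eta$ incorrectly predicted elements.

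Next, I would show that each incorrectly predicted element is inserted at most twice. The key structural observation is that $\{t : a \in V_t\} = [t^+_a, t^-_a]$ is a single (contiguous) interval, and $\{t : a \in \hat{V}_t\} = [\hat{t}^+_a - w, \hat{t}^-_a + w]$ is also a single interval. Hence $\{t : a \in V^2_t\} = [t^+_a, t^-_a] \setminus [\hat{t}^+_a - w, \hat{t}^-_a + w]$ is the set-difference of two intervals, which consists of at most two maximal contiguous subintervals (one on each side of the predicted interval). Each maximal subinterval contributes exactly one transition into $V^2_t$ and therefore exactly one call to $\dynamicins(A, a)$.

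Combining the two parts, the total number of calls to $\dynamicins$ is at most $2\eta$. I do not expect any serious obstacle here; the only subtlety is keeping the bookkeeping clean, in particular verifying that $V(A) = V^2_t$ after the insertion/deletion loops so that an element is reinserted only after leaving $V^2_t$ and re-entering it, which is exactly what the interval-difference argument counts.
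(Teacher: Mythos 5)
Your proof is correct and arrives at the bound by essentially the same mechanism as the paper: you both observe that only incorrectly predicted elements can ever enter $V^2_t$, and that each such element can be inserted into $A$ at most twice. The paper presents this by partitioning the entry events $V^2_t\setminus V^2_{t-1}$ by cause --- entries of type $(V_t\setminus V_{t-1})\setminus\hV_t$ (newly active, not predicted) versus $(\hV_{t-1}\setminus\hV_t)\cap V_t$ (still active, just left the predicted window) --- and bounds each family by $\eta$. Your interval-difference formulation, that $\{t: a\in V^2_t\}$ is $[t^+_a,t^-_a]\setminus[\hat t^+_a-w,\hat t^-_a+w]$ and so has at most two connected components, is a cleaner and slightly more self-contained way of explaining why each element contributes at most two insertions; these two components correspond exactly to the two event types the paper distinguishes. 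Your bookkeeping observation that $V(A)=V^2_t$ after each update loop, so an element is re-inserted only after leaving and re-entering $V^2_t$, is the right way to make the argument airtight, and the paper's own write-up is somewhat loose on exactly this point.
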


The main result for the warm-up algorithm is the following.
\begin{theorem}
\label{thm:warmup-main}
For monotone submodular maximization under a cardinality constraint $k$,  
Algorithm~\ref{alg:framework} with the \textsc{WarmUp-Precomputations} and \textsc{WarmUp-UpdateSol} subroutines achieves, for any tolerance $w$ and constant $\epsilon > 0$,   an amortized query complexity during the streaming phase of $\tilde{O}(\eta + w + k)$,  an approximation of $1/4 - \epsilon$ in expectation, and a query complexity of $\tilde{O}(n^2 k)$ during the precomputation phase. 
\end{theorem}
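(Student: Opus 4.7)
The plan is to instantiate the two black-box subroutines with known algorithms and then combine the three previously stated lemmas. Concretely, I would take $\robust$ to be the deletion-robust algorithm of Kazemi et al.~\cite{kazemi2017deletion}, which gives $\alpha_1 = 1/2 - \epsilon$ with the query complexities quoted in the preliminaries, and take $\dynamic$ to be the algorithm of Lattanzi et al.~\cite{lattanzi2020fully}, which gives $\alpha_2 = 1/2 - \epsilon$ with $\tilde{O}(\poly(\log m, \log k))$ amortized update time on any stream of length $m$. The approximation guarantee then follows immediately from Lemma~\ref{lem:combined_sol}: at every time step $t$,
\[
\max\{f(S^1), f(S^2)\} \;\geq\; \tfrac{1}{2}\min\{\alpha_1,\alpha_2\}\cdot \opt(V_t) \;=\; \left(\tfrac{1}{4} - \tfrac{\epsilon}{2}\right)\opt(V_t),
\]
which is $(1/4 - \epsilon)\opt(V_t)$ after rescaling $\epsilon$ by a constant.

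\paragraph*{Precomputation complexity.}
\textsc{WarmUp-Precomputations} invokes $\robustone(f, \hV_t, k, d=\eta + 2w)$ once for each $t \in [n]$. Since $|\hV_t| \leq n$, the Kazemi et al.\ bound $O(|V|\cdot(k + \epsilon^{-1}\log k))$ gives $\tilde{O}(nk)$ queries per call, hence $\tilde{O}(n^2 k)$ in total.

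\paragraph*{Streaming amortized complexity.}
I would decompose the work performed by \textsc{WarmUp-UpdateSol} at each time $t$ into three pieces: (i) the call to $\robusttwo(f, P_t, \hV_t\setminus V_t^1, k)$, (ii) the maintenance calls to $\dynamicins$ and $\dynamicdel$ on the data structure $A$, and (iii) the call to $\dynamicsol(A)$. For (i), the deletion-robust bound with $d = \eta + 2w$ gives
\[
O\!\left((\epsilon^{-2}(\eta+2w)\log k + k)\cdot \epsilon^{-1}\log k\right) \;=\; \tilde{O}(\eta + w + k)
\]
queries per step, contributing $\tilde{O}(\eta + w + k)$ to the amortized bound directly. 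For (iii), extracting the maintained solution costs $O(k)$ per step, also within the target. The only nontrivial piece is (ii): here I would invoke Lemma~\ref{lem:size_A}, which caps the total number of $\dynamicins$ calls on $A$ over the entire stream by $2\eta$; since each inserted element can be deleted at most once, the total number of $\dynamicdel$ calls is also $O(\eta)$. Consequently the effective input length seen by $\dynamic$ is $m = O(\eta)$, so each of its updates costs $\tilde{O}(\poly(\log \eta, \log k))$ amortized and the total work in (ii) is $\tilde{O}(\eta \cdot \poly(\log\eta,\log k))$, which averaged over $n$ steps is absorbed into $\tilde{O}(\eta + w + k)$.

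\paragraph*{Main obstacle.}
The only subtlety I anticipate is in piece (ii): one must observe that \textsc{WarmUp-UpdateSol} feeds $A$ with the substream induced by the unpredicted active elements only, so the $\log$-factor in Lattanzi et al.'s amortized update time is $\log\eta$ rather than $\log n$. This is exactly what Lemma~\ref{lem:size_A} enables. Everything else is bookkeeping: summing the three contributions and dividing by $n$ yields the claimed $\tilde{O}(\eta + w + k)$ amortized streaming query complexity, while the approximation and precomputation bounds follow directly from the cited guarantees and Lemma~\ref{lem:combined_sol}.
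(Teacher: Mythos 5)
Your proposal matches the paper's proof essentially line by line: same instantiations ($\robust$ from Kazemi et al., $\dynamic$ from Lattanzi et al.), same use of Lemma~\ref{lem:combined_sol} for the approximation, same $\tilde{O}(n^2k)$ accounting for the precomputation phase, and the same decomposition of the streaming cost into the per-step $\robusttwo$ term and the Lemma~\ref{lem:size_A}-bounded work inside $\dynamic$. The only cosmetic difference is that you price $\dynamicsol$ at $O(k)$ per step separately, whereas the paper folds it into the amortized update time of $\dynamic$; both are within the claimed bound.
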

\begin{proof}
We choose \dynamic \ to be the algorithm from \cite{lattanzi2020fully}  and \robust \ to be the algorithm from \cite{kazemi2017deletion}.  By Lemma~\ref{lem:combined_sol}, and since we have $\alpha_1 = \alpha_2 = 1/2 - \epsilon$, the approximation is $\frac{1}{2} \cdot\min\{\alpha_2, \alpha_1\} = \frac{1}{2} \cdot\min\{\frac{1}{2} - \epsilon, \frac{1}{2} - \epsilon\} \geq \frac{1}{4} - \epsilon$.

  For the query complexity during the streaming phase, by Lemma~\ref{lem:size_A}, the total number of calls to \dynamicins \ on $A$ is $O(\eta)$. The total number of calls to \dynamicdel \ on $A$ is also  $O(\eta)$ since an element can only be deleted if it has been inserted. Thus, the total number of insertions and deletions handled by the dynamic streaming algorithm \dynamic \ is $O(\eta)$. Since the amortized query complexity of \dynamic \ is $O(\poly(\log n, \log k))$, we get that the amortized number of queries performed when calling \dynamicins, \dynamicdel, and \dynamicsol \ on $A$ is $O(\eta \cdot \poly(\log \eta, \log k) / n)$. For the number of queries due to 
$\robusttwo$,
at every time step $t$, the algorithm calls \robusttwo$(f, P_t, (\hat{V}_t \setminus V^1), k)$ with maximum number of deletions $d = \eta + 2w$, which causes at most $\tilde{O}\left(\eta + w + k\right)$ queries at each time step $t$ since the query complexity of $\robusttwo$ is $\tilde{O}\left(d + k\right)$. The total amortized query complexity during the streaming phase is thus $\tilde{O}(\eta + w + k)$. For the query complexity during the precomputation phase,   the query complexity of $\robustone$
    is $\tilde{O}(|V| k)$ and there are $n$ calls to $\robustone$, so the query complexity of that phase is $\tilde{O}(n^2 k)$.
\end{proof}

 In the next sections, we  improve the dependency on $\eta, w, $ and $k$ for the amortized query complexity from linear to logarithmic, the approximation factor from $1/4$ to $1/2$, and the precomputations query complexity from $O(n^2 k)$ to $\tilde{O}(n)$. We also remove the assumption that the prediction error $\eta$ is known.

\section{The UpdateSol Subroutine}
\label{sec:updatesol}

The main technical difficulty is to improve the linear dependencies in $\eta, w,$ and $k$ for the amortized query complexity of \textsc{UpdateSol}. In this section, we improve them from linear to logarithmic. For finding a solution over the predicted active elements $V^1_t$, the main idea is to not only use precomputations $P_t,$ but also to exploit computations from previous time steps $t' < t$ over the previous predicted active elements $V_{t'}^1$. As in the warm-up subroutine, the new \textsc{UpdateSolMain}  subroutine also uses a precomputed deletion-robust solution $P_t$, but it  requires  $P_t$ to satisfy a property termed the \emph{strongly robust property} (Definition~\ref{def:robust_properties} below), which is stronger than the deletion-robust property of Definition~\ref{def:robust}.  A strongly robust solution comprises two components $Q$ and $R$, where  $R$ is a small set of elements that  have a high marginal contribution to $Q$. The set $Q$ is such that, for any deleted set $D$,  $f(Q \setminus D)$ is guaranteed to, in expectation over the randomization of $Q$, retain a large amount of $f(Q)$.

\begin{definition}
\label{def:robust_properties}
A pair of sets $(Q, R)$ is $(d, \epsilon, \gamma)$-strongly robust, where $d,k, \gamma \geq 0$ and $\epsilon \in [0, 1]$, if
\begin{itemize}
\item \textbf{Size.} $|Q| \leq k$ and $|R| = 
O(\epsilon^{-2}(d + k)\log k)$ with probability $1$,
%g(d,k) $ for some function $g$ that is linear in the first argument,
\item \textbf{Value.}  $ f(Q) \geq |Q|  \gamma / (2k)$ with probability $1$.
%and $R \subseteq \{ e \in V$, $f_Q(e) \geq  \gamma / (2k)\}$. 
In addition, if $|Q| < k$, then 
for any set $S \subseteq V \setminus R$ we have $f_Q(S ) < |S| \gamma / (2k) + \epsilon \gamma$.
% $R \supseteq \{ e \in V$, $f_Q(e) \geq  \frac{(1+\epsilon)\gamma}{2k}\}$.
\item \textbf{Robustness.}  For any  $D \subseteq V$ s.t. $|D| \leq d$, $\E_Q[f(Q \setminus D)] \geq (1-\epsilon)  f(Q).$ 
\end{itemize}
\end{definition}

The set $P$ returned by the first stage $\textsc{Robust1}(f, V, k, d)$ of the deletion-robust algorithm of Kazemi et al.~\cite{kazemi2017deletion} can be decomposed into two sets $Q$ and $R$ that are, for any $d, \epsilon > 0$ and with $\gamma = \OPT(V)$, where $\OPT(V) := \max_{S \subseteq V : |S| \leq k} f(S)$,  $(d, \epsilon, \gamma)$-strongly robust.\footnote{The size of $R$ in \cite{kazemi2017deletion} is $O(d \log k/ \epsilon)$ which is better than what is required to be strongly robust.} Thus, with the $\textsc{Robust}$ algorithm of \cite{kazemi2017deletion}, the set $P_t$ returned by \textsc{WarmUp-Precomputations} can be decomposed into $P_t$ and $Q_t$ that are, for any $\epsilon > 0$, $(2(\eta + 2w), \epsilon, \OPT(\hat{V}_t))$-strongly robust. We omit the proof of the $(d, \epsilon, \gamma)$-strongly robust property of $\textsc{Robust1}$ from \cite{kazemi2017deletion} and, in the next section, we instead prove strong-robustness for our \textsc{PrecomputationsMain} subroutine which has better overall query complexity than \cite{kazemi2017deletion}.

% \begin{lem}
% \label{lem:robust_properties}[\citep{kazemi2017deletion}]
% For any monotone submodular function $f : 2^V \rightarrow \mathbb{R}$,    $k \geq 0 $,  
% $d, \gamma \geq 0$ and  $\epsilon > 0$
% the \robustone \ algorithm of \cite{kazemi2017deletion} with parameter $\gamma$ returns  a pair of sets $(Q, R)$ that is $(d, \epsilon, \gamma)$-strongly robust
% with $|R| = O(d \log k/ \epsilon)$.
% \end{lem}
% \begin{proof}
% Lemma~2 from \citep{kazemi2017deletion} shows that
% $\E_Q[f(Q \setminus D)] \geq (1-\epsilon)  f(Q)$. The other properties are all immediate from the definition of Algorithm~2 in \citep{kazemi2017deletion}.
% \end{proof}

The \textsc{UpdateSolMain}  subroutine proceeds in phases.  During each phase, \textsc{UpdateSolMain} maintains a data structure $(B, A, \eta_\text{old})$. The set $B = Q_{t'}$ is a fixed base set chosen during the first time step $t'$ of the current phase.  $A$ is a dynamic data structure used by a dynamic submodular maximization algorithm \textsc{Dynamic} that initializes $A$ over function $f_B$, cardinality constraint $k - |B|$, and a parameter $\gamma$ to be later discussed. If a new phase starts at time $t$, note that if $(Q_t, R_t)$ are strongly robust, then the only predicted active elements $V_t^1$ that are needed to find a good solution at time $t$ are, in addition to $B = Q_t$, the small set of elements $R_t$ that are also in $V_t^1$. Thus to find a good solution for the function $f_B$, $(R_t \cap V^1_t) \cup V_t^2$ are inserted into $A$. The solution that \textsc{UpdateSolMain} outputs at time step $t$  are the active elements  $B \cap V_t$ that are in the base for the current phase, together with the solution $\dynamicsol(A)$ maintained by  $A$.

\begin{algorithm}[H]
\setstretch{1.05}
\caption{$\textsc{UpdateSolMain}$}
\textbf{Input:} function $f$, data structure $(B, A, \eta_{\text{old}})$,  constraint $k$,  precomputations $P_t = (Q_t,  R_t)$, 
 $t$, upper bound $\eta'_t$ on prediction error, $V^1_t$, $V^2_t$, $V_{t-1}$, parameter $\gamma_t$ 
\begin{algorithmic}[1]
\If{$t = 1$ or  $|\text{Ops}^{\star}(A)| > \frac{\eta_{\text{old}}}{2}+w $}
\Comment{Start a new phase}
  \State{$B \leftarrow Q_t$} 
  \State{$A \leftarrow \textsc{DynamicInit}(f_{B}, k - |B|,  \gamma = \gamma_t (k - |B|) / k)$}
    \For{$a \in (R_t \cap V^1_t) \cup V_t^2$}  
\State{$\dynamicins(A, a)$}
\EndFor
\State{$\eta_{\text{old}} \leftarrow \eta'_t$}
\Else
\Comment{Continue the current phase}
  \For{$a \in V_{t} \setminus V_{t-1}$}  
\State{$\dynamicins(A, a)$}
\EndFor
  \For{$a \in (\elem(A \cap V_{t-1})\setminus V_{t}$}  
\State{$\dynamicdel(A, a)$ }
\EndFor
\EndIf 
\State{$S \leftarrow (B \cup \dynamicsol(A))\cap V_t$}
\State{\textbf{return} $(B, A, \eta_{\text{old}}), S$} 
\end{algorithmic}
\label{alg:updatesol}
\end{algorithm}

% \ebcomment{Todo: define guesses $H$ of $\eta$}

During the next time steps $t$ of the current phase, if an element $a$ is inserted into the stream then $a$ is inserted in $A$ (independently of the predictions). If an element is deleted from the stream, then if it was in $A$, it is deleted from $A$.
We define $\text{Ops}^{\star}(A)$ to be the collection of all insertion and deletion operations to $A$, excluding the insertions of elements in $(R_t \cap V^1_t) \cup V_t^2$ at the time $t$ where $A$ was initialized.
The current phase ends when $\text{Ops}^{\star}(A)$ exceeds $\eta_\text{old}/2 + w.$ Since the update time of the dynamic algorithm in \cite{lattanzi2020fully} depends on length of the stream, we upper bound the length of the stream handled by $A$ during a phase.

\paragraph*{The approximation} The parameter $\gamma_t$ corresponds to a guess for $\OPT_t := \OPT(V_t)$. In Section~\ref{sec:mainresult}, we present the \textsc{UpdateSolFull} subroutine which calls \textsc{UpdateSolMain} with different guesses $\gamma_t$. This parameter $\gamma_t$ is needed when initializing $\textsc{Dynamic}$ because our analysis requires that \textsc{Dynamic}
satisfies a property that we call threshold-based, which we formally define next.

% \ebedit{Todo: add lemma showing that assumption on $\dynamicsol(A)$ is satisfied by algorithm from Lattanzi et al.}

\begin{definition}
\label{def:dynamic_thres} A dynamic algorithm
 $\dynamic$  is  threshold-based  if, when initialized with an arbitrary threshold parameter $\gamma$ such that $\gamma \leq \OPT_t \leq (1+\epsilon)\gamma$, a cardinality constraint $k$, and $\epsilon > 0$, it maintains a data structure $A_t$ and solution $\sol_t = \dynamicsol(A_t)$ that satisfy, for all time steps  $t$,
\begin{enumerate}
    \item $f(\sol_t) \geq \frac{\gamma}{2k} |\sol_t|  $, and 
    \item if $|\sol_t| < (1-\epsilon) k$ then 
    for any set $S \subseteq V(A_t)$ we have $$f_{\sol_t}(S) < \frac{|S| \gamma}{2k} + \epsilon \gamma.$$
\end{enumerate}
\end{definition}

\begin{restatable}{rLem}{dynamicthres}
\label{lem:dynamicthres}
The $\dynamic$ algorithm of Lattanzi et al.~\cite{lattanzi2020fully} 
is threshold-based.
\end{restatable}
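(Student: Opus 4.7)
The plan is to verify that the dynamic algorithm of Lattanzi et al.~\cite{lattanzi2020fully}, when instantiated with a threshold parameter $\gamma$ such that $\gamma \leq \OPT_t \leq (1+\epsilon)\gamma$, satisfies both conditions of Definition~\ref{def:dynamic_thres}. First I would recall that, at its core, the Lattanzi et al.\ algorithm is a thresholded greedy procedure: for a given guess $\gamma$ of the optimum value, it only adds an element $a$ to the current solution if its marginal contribution exceeds the threshold $\tau = \gamma/(2k)$, and it maintains this invariant dynamically under insertions and deletions using a leveled data structure. The full algorithm runs $O(\epsilon^{-1}\log k)$ parallel copies with geometrically spaced guesses of $\OPT_t$ to select the best, but for the purposes of this lemma we may fix a single copy whose guess $\gamma$ satisfies the assumed bracketing $\gamma \leq \OPT_t \leq (1+\epsilon)\gamma$.

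For condition (1), I would observe that every element $a \in \sol_t$ was, at the moment it was added to the solution, of marginal contribution at least $\tau = \gamma/(2k)$ relative to the portion of $\sol_t$ already in place. Summing telescopically over all elements currently in $\sol_t$ (in the order they were added by the algorithm's leveled structure) and using the invariant maintained dynamically at every time step, we get
\[
f(\sol_t) \;\geq\; |\sol_t|\cdot \tau \;=\; \frac{\gamma}{2k}\,|\sol_t|,
\]
which is exactly the claimed bound. The dynamic maintenance routines of \cite{lattanzi2020fully} preserve this invariant on marginals after every insertion and deletion, which is where the argument hinges; I would cite the relevant invariant (the level-wise marginal condition) rather than reprove it.

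For condition (2), I would argue that if $|\sol_t| < (1-\epsilon)k$, then, up to the randomized level-truncation built into the algorithm, every active element $a \in V(A_t)$ satisfies $f_{\sol_t}(a) < \tau + \epsilon\gamma/k$, because otherwise the dynamic greedy would have been able to add it at some level. Then, using submodularity to decompose the marginal of a set into a sum of single-element marginals, we get
\[
f_{\sol_t}(S) \;\leq\; \sum_{a \in S} f_{\sol_t}(a) \;<\; |S|\Big(\frac{\gamma}{2k}+\frac{\epsilon\gamma}{k}\Big) \;\leq\; \frac{|S|\gamma}{2k} + \epsilon\gamma,
\]
using $|S| \leq k$ in the last step; this is precisely the required inequality.

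The main obstacle, and the part that will require the most care, is extracting the right per-element threshold statement from the Lattanzi et al.\ algorithm, since that algorithm is randomized (it uses a random level truncation to maintain consistency under deletions) and its dynamic invariants are stated slightly differently from Definition~\ref{def:dynamic_thres}. Concretely, I would need to check that the level-based invariants on ``candidate'' elements translate into the deterministic marginal upper bound $f_{\sol_t}(a) < \gamma/(2k) + \epsilon\gamma/k$ for every active $a$, rather than merely in expectation, and that the $\epsilon\gamma$ slack correctly absorbs any discrepancy arising from the geometric bracketing $\gamma \leq \OPT_t \leq (1+\epsilon)\gamma$ and from the truncation at level $(1-\epsilon)k$. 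Once those bookkeeping details are pinned down, both conditions follow essentially directly from invariants already proven in \cite{lattanzi2020fully}.
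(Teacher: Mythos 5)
Your argument for condition (1) matches the paper's: each element enters the solution with marginal contribution at least $\gamma/(2k)$, and deletions only help by submodularity.

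For condition (2), however, there is a genuine gap, and it is exactly at the step you yourself flag as "the part that will require the most care." You want a per-element bound $f_{\sol_t}(a) < \gamma/(2k) + \epsilon\gamma/k$ for every active $a$, and then sum over $S$. But no such per-element bound holds. The invariant in Lattanzi et al.\ is stated relative to the set $X$ they construct in the proof of their Theorem 5.1 (the peeled solution before deletions, so $\sol_t \subseteq X$): when $|X| < k$, every $e \notin \sol_t$ satisfies $f_X(e) \leq \gamma/(2k)$. This gives you nothing per-element about $f_{\sol_t}(a)$, because submodularity runs the wrong way: $\sol_t \subseteq X$ implies $f_{\sol_t}(a) \geq f_X(a)$, and an individual marginal $f_{\sol_t}(a)$ can be as large as $\Theta(\gamma)$ if the elements of $X \setminus \sol_t$ that were "blocking" $a$ happen to get deleted. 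So your claimed per-element bound is simply false, and the $\epsilon\gamma/k$ slack you hoped would absorb the discrepancy cannot do so.

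The paper's proof instead routes through $X$ explicitly: $f_{\sol_t}(S) \leq f(S\cup X) - f(\sol_t)$ by monotonicity, then $= [f(S\cup X) - f(X)] + [f(X) - f(\sol_t)]$. The first bracket is handled by applying submodularity to $f_X(e) \leq \gamma/(2k)$ (the invariant that actually holds), giving $\leq |S|\gamma/(2k)$; the second bracket uses the \emph{aggregate} bound $f(\sol_t) \geq (1-\epsilon)f(X)$ from Theorem 5.1, giving a single additive error $\epsilon f(X) \leq \epsilon \cdot \frac{1+\epsilon}{1-\epsilon}\gamma$. The $\epsilon\gamma$ term in Definition~\ref{def:dynamic_thres} comes from this one bulk term, not from distributing $\epsilon\gamma/k$ over $k$ elements. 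To fix your proof you would need to introduce $X$ as an intermediary and replace your per-element slack with this aggregate bound.
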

Note that the initially published algorithm of \cite{lattanzi2020fully} had an issue with correctness; here we refer to the revised version.
We provide a proof for this lemma in Appendix~\ref{app:updatesol}. The main lemma for the approximation guarantee is the following.

% \ebcomment{Moving UpdateSolFull to main result section}

% \ebcomment{I rewrote the lemma statement below to be in terms of UpdateSol}

\begin{lemma}
\label{lem:apxpred} Consider the data structure $(B, A, \eta_{\text{old}})$ returned by $\textsc{UpdateSolMain}$ at time $t$. Let $t'$ be the time at which $A$ was initialized,  $(Q_{t'}, R_{t'})$ and  $\gamma_{t'}$ be the  precomputations and guess for $\OPT_{t'}$  inputs to $\textsc{UpdateSolMain}$ at time $t'$.  If  $(Q_{t'}, R_{t'})$ are $(d= 2(\eta_{\text{old}} + 2w), \epsilon, \gamma_{t'})$-strongly robust, $\gamma_{t'}$ is such that $\gamma_{t'} \leq \OPT_t \leq (1+\epsilon) \gamma_{t'}$, and $\dynamic$ is a threshold-based dynamic algorithm, then the set $S$ returned by $\textsc{UpdateSolMain}$ is such that
$$\E[f(S)] \geq \frac{1-5\epsilon}{2}\gamma_{t'}.$$
\end{lemma}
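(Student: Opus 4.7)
The plan is to decompose $S = (B \cup \sol_t) \cap V_t$, where $\sol_t := \dynamicsol(A)$, and analyze the two parts separately: the base $B = Q_{t'}$ via strong robustness, and $\sol_t$ via the threshold-based property of $\dynamic$. A quick induction shows $V(A_t) \subseteq V_t$ (elements enter $A$ only through the initial insertion of subsets of $V_{t'}$ or through new stream insertions, and every stream-deletion of an $A$-element propagates to $A$), so $\sol_t \subseteq V_t$ and $S = (B \cap V_t) \cup \sol_t$. Submodularity, applied with $B \cap V_t \subseteq B$, gives the starting inequality $f(S) \geq f(B \cap V_t) + f_B(\sol_t)$.

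The key technical step is to show $|B \setminus V_t| \leq d = 2(\heta + 2w)$, which is what justifies invoking strong robustness with deletion set $D = B \setminus V_t$. I would split this as $|B \setminus V_{t'}| + |(B \cap V_{t'}) \setminus V_t|$: the first term is at most $|\hV_{t'} \setminus V_{t'}| \leq \heta + 2w$ by Lemma~\ref{lem:stream_robust}. For the second, I use the property $Q_{t'} \subseteq R_{t'}$ of the Kazemi et al.\ precomputations, which implies $B \cap V_{t'} \subseteq R_{t'} \cap V^1_{t'} \subseteq V(A_{t'})$; hence each deletion of an element of $B \cap V_{t'}$ during the phase counts toward $|\text{Ops}^{\star}(A)|$, which by the phase condition is at most $\heta/2 + w$. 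Summing, $|B \setminus V_t| \leq 3\heta/2 + 3w \leq d$, so strong robustness yields $\E[f(B \cap V_t)] \geq (1-\epsilon)\E[f(B)]$. Writing $f(B) = f(B \cup \sol_t) - f_B(\sol_t)$ and rearranging,
$$\E[f(S)] \geq (1-\epsilon)\E[f(B \cup \sol_t)] + \epsilon \E[f_B(\sol_t)] \geq (1-\epsilon)\E[f(B \cup \sol_t)].$$

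It remains to show $\E[f(B \cup \sol_t)] \geq (1/2 - 2\epsilon)\gamma_{t'}$, which I would establish by casing on $|\sol_t|$. If $|\sol_t| \geq (1-\epsilon)(k - |B|)$ (or $|B| = k$), combining the size clause of strong robustness with property~1 of threshold-based gives $f(B \cup \sol_t) \geq (|B| + |\sol_t|)\gamma_{t'}/(2k) \geq (1-\epsilon)\gamma_{t'}/2$. Otherwise, let $O$ be an optimum at time $t$, and write $O \setminus (B \cup \sol_t) = O_1 \sqcup O_2$ with $O_1 \subseteq V(A_t)$ and $O_2 \subseteq V_t \setminus V(A_t) \subseteq V^1_{t'} \setminus R_{t'}$. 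Submodularity gives $\OPT_t - f(B \cup \sol_t) \leq f_{B \cup \sol_t}(O_1) + f_B(O_2)$; property~2 of threshold-based bounds the first summand, and the ``$|Q|<k$'' clause of strong robustness (valid since $O_2 \subseteq V \setminus R_{t'}$) bounds the second, each by $|O_i|\gamma_{t'}/(2k) + \epsilon\gamma_{t'}$. Therefore $\OPT_t \leq f(B \cup \sol_t) + \gamma_{t'}/2 + 2\epsilon\gamma_{t'}$, and using $\OPT_t \geq \gamma_{t'}$ yields $f(B \cup \sol_t) \geq (1/2 - 2\epsilon)\gamma_{t'}$. Chaining with the earlier inequality, $\E[f(S)] \geq (1-\epsilon)(1/2 - 2\epsilon)\gamma_{t'} \geq (1 - 5\epsilon)\gamma_{t'}/2$.

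The main obstacle is the counting claim $|B \setminus V_t| \leq d$: without the inclusion $Q_{t'} \subseteq R_{t'}$, deletions of elements in $B \cap V_{t'}$ would not be tracked by $\text{Ops}^{\star}$, and $|B \setminus V_t|$ could reach $k$, forcing $d$ to scale with $k$ rather than with $\heta + w$. The remainder is routine submodular calculus; the only further subtlety is identifying $O_2 \subseteq V \setminus R_{t'}$ cleanly so that the ``$|Q|<k$'' strong robustness clause applies.
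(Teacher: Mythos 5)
Your overall plan is the same as the paper's: decompose $S = (B \cap V_t) \cup \sol_t$, bound $f(B\cap V_t)$ via the robustness clause of the strongly-robust precomputations, and bound the $\sol_t$ contribution via the threshold-based property of $\dynamic$, then case on $|\sol_t|$ versus $k - |B|$. The final case analysis and the submodular accounting (splitting $O \setminus (B \cup \sol_t)$ into an $A$-portion handled by property~2 of Definition~\ref{def:dynamic_thres} and a $(V \setminus R_{t'})$-portion handled by the value clause) match the paper's Lemma~\ref{lem:apxpred} almost verbatim.

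The problem is the counting step you yourself flag as the ``main obstacle.'' You invoke an inclusion $Q_{t'} \subseteq R_{t'}$ to argue that every element of $B \cap V_{t'}$ sits inside $V(A_{t'})$, so that deletions of such elements are counted by $\text{Ops}^{\star}(A)$. But $Q_{t'} \subseteq R_{t'}$ is not a hypothesis of the lemma, it is not implied by Definition~\ref{def:robust_properties}, and it is false for the $\textsc{Robust1FromDynamic}$ construction that the main theorem actually feeds into this lemma: there $Q$ is built from the levels $\ell$ with $n/2^\ell > d/\epsilon + k$ while $R$ is built from the complementary levels with $n/2^\ell \leq d/\epsilon + k$, so $Q$ and $R$ are drawn from \emph{disjoint} parts of the data structure, and in general $Q \cap R = \emptyset$. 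Attributing the inclusion to ``Kazemi et al.'s precomputations'' does not help either, since the statement must hold for any strongly-robust pair, and the precomputations used in Theorem~\ref{thm:main} come from $\textsc{PrecomputationsMain}$, not from \cite{kazemi2017deletion}. With that inclusion removed, your bound on $|(B \cap V_{t'})\setminus V_t|$ has no support, exactly the failure mode you describe.

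The paper's Lemma~\ref{lem:Q_robustness} avoids this dependence on $Q \subseteq R$ by a coarser chain: it bounds $|Q_{t'}\setminus V_t|$ through $|Q_{t'}\setminus V_t| \leq |\hV_{t'}\setminus V_t| \leq |\hV_{t'}\setminus V_{t'}| + |V_{t'}\setminus V_t|$, and then uses the phase-restart condition to control $|V_{t'}\setminus V_t|$ directly, never needing to know which particular elements of $V_{t'}$ happen to live inside $A$. If you want to salvage your route without appealing to $Q \subseteq R$, you should replace the second term of your split with a bound on all of $|V_{t'}\setminus V_t|$ rather than only the $B$-portion, since that is what the phase length actually controls.
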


% \aacomment{Can we get rid of the subscript $t'$ in $\gamma_{t'}$? One can easily make the case that the subscript should be $t$ instead of $t'$. Also, there will eventually be multiple values of $\gamma_{t'}$ that we will try. Therefore, I think it is best to get rid of it.}

% \aacomment{Can we make $\gamma_t$ a part of the data structure? Then, we will not have to make a change to the algorithmic framework in Algorithm 1. Also, this would make it clear that $\gamma_t$ is tied to the data structure and will not change within a phase.}

% \ebcomment{I suggest that we ignore the restriction to $V_t(\gamma)$ in this section and deal with how much value we lose by going from $V_t$ to  $V_t(\gamma)$ in Section 7 with the full algorithm. If you agree with that idea, we would replace $V_t(\gamma)$ by $V_t$ in the remainder of this section. I'm hoping this would simplify things but let me know if it's causing some issues I'm not anticipating.}

To prove Lemma~\ref{lem:apxpred}, we first bound the value of $Q_{t'} \cap V_t$.
\begin{lemma}
\label{lem:Q_robustness}
Consider the data structure $(B, A, \eta_{\text{old}})$ returned by $\textsc{UpdateSolMain}$ at time $t$. Let $t'$ be the time at which $A$ was initialized,  $(Q_{t'}, R_{t'})$ and  $\gamma_{t'}$ be the  precomputations and guess for $\OPT_{t'}$  inputs to $\textsc{UpdateSolMain}$ at time $t'$. If precomputations $(Q_{t'}, R_{t'})$ are $(d= 2(\eta_{\text{old}} + 2w), \epsilon, \gamma_{t'})$-strongly robust, then we have
$$\E[f(Q_{t'} \cap V_t)] \geq (1-\epsilon)   f(Q_{t'}) \geq (1-\epsilon)   |Q_{t'}|  \gamma_{t'} / (2k).$$
\end{lemma}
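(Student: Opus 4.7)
The plan is to apply the Robustness clause of Definition~\ref{def:robust_properties} to a deleted set $D$ of cardinality at most $d = 2(\eta_{\text{old}} + 2w)$ for which $Q_{t'} \setminus D = Q_{t'} \cap V_t$. The difficulty with using $D := Q_{t'} \setminus V_t$ directly is that it depends on the coins used to draw $Q_{t'}$; instead I will take $D := R_{t'} \setminus V_t$, which is a function only of the deterministic set $R_{t'}$ and the stream. This works because the Kazemi et al.~\cite{kazemi2017deletion} construction of the pair satisfies $Q_{t'} \subseteq R_{t'}$, so $Q_{t'} \cap (V \setminus R_{t'}) = \emptyset$ and hence $Q_{t'} \setminus (R_{t'} \setminus V_t) = Q_{t'} \cap V_t$.

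The main step is the bound $|R_{t'} \setminus V_t| \leq d$. I split $R_{t'} \setminus V_t = (R_{t'} \setminus V_{t'}) \cup ((R_{t'} \cap V_{t'}) \setminus V_t)$ and bound each piece separately. For the first piece, note that $R_{t'} \subseteq \hat{V}_{t'}$ because $R_{t'}$ is produced by $\textsc{Robust1}$ run on $\hat{V}_{t'}$; Lemma~\ref{lem:stream_robust} then gives
\[
|R_{t'} \setminus V_{t'}| \;\leq\; |\hat{V}_{t'} \setminus V_{t'}| \;\leq\; \eta_{t'} + 2w \;\leq\; \eta_{\text{old}} + 2w,
\]
where the last step uses that $\eta_{\text{old}} = \eta'_{t'}$ upper-bounds the true prediction error at $t'$. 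For the second piece, I use $R_{t'} \cap V_{t'} \subseteq R_{t'} \cap \hat{V}_{t'} \cap V_{t'} = R_{t'} \cap V_{t'}^1$; hence every element of $R_{t'} \cap V_{t'}$ is inserted into $A$ at phase initialization (it sits in the set $(R_{t'} \cap V_{t'}^1) \cup V_{t'}^2$ of the initialization loop of \textsc{UpdateSolMain}), so each stream deletion of such an element during $[t'+1, t]$ triggers a $\dynamicdel$ call and is counted in $\text{Ops}^{\star}(A)$. Since the phase-termination threshold was not exceeded at time $t$, we have $|\text{Ops}^{\star}(A)| \leq \eta_{\text{old}}/2 + w$, so the second piece has size at most $\eta_{\text{old}}/2 + w$. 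Summing gives $|D| \leq \tfrac{3}{2}\eta_{\text{old}} + 3w \leq 2(\eta_{\text{old}} + 2w) = d$.

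With the cardinality bound verified, the Robustness clause of Definition~\ref{def:robust_properties} yields $\E[f(Q_{t'} \cap V_t)] = \E[f(Q_{t'} \setminus D)] \geq (1-\epsilon) f(Q_{t'})$, and the Value clause immediately supplies $f(Q_{t'}) \geq |Q_{t'}|\gamma_{t'}/(2k)$ almost surely. Chaining these two inequalities proves the lemma. The main obstacle is precisely the choice of $D$: one needs a set that is (i) oblivious to the random realization of $Q_{t'}$, (ii) still a superset of $Q_{t'} \setminus V_t$ in every realization, and (iii) of size at most $d$. The set $R_{t'} \setminus V_t$ satisfies all three requirements only because $Q_{t'} \subseteq R_{t'}$ and because the accounting identity that every element of $R_{t'} \cap V_{t'}$ enters $A$ at initialization lets the phase-termination threshold $\eta_{\text{old}}/2 + w$ control the deletions part of the bound.
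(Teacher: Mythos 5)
Your instinct to worry about whether $D$ can be taken independent of the randomization of $Q_{t'}$ is a good one, and it is a subtlety the paper handles only implicitly. But the fix you propose rests on a claim that is false: $Q_{t'} \subseteq R_{t'}$. In both the Kazemi et al.\ decomposition and, more to the point, in the paper's own construction (\textsc{Robust1FromDynamic}), $Q$ and $R$ are \emph{complementary} components, not nested. In \textsc{Robust1FromDynamic}, $Q$ is assembled from the partial solutions $S_{i,\ell}$ at the large levels (those with $n/2^\ell > d/\epsilon + k$) while $R$ is assembled from the buckets $A_{i,\ell} \cup B_\ell$ at the small levels — two disjoint index ranges. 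There is no reason an element of $Q_{t'}$ should lie in $R_{t'}$, and so $Q_{t'} \setminus (R_{t'} \setminus V_t) = Q_{t'} \cap V_t$ fails whenever some $a \in Q_{t'} \setminus R_{t'}$ has been deleted by time $t$. Since that identity is what lets you invoke the Robustness clause with $D = R_{t'} \setminus V_t$, the argument does not go through. (There is a secondary issue as well: $R_{t'}$ is itself a function of the randomized data structure at time $t'$, so $R_{t'} \setminus V_t$ is not obviously any more "oblivious" than $Q_{t'} \setminus V_t$.)

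The correct choice is $D := \hat{V}_{t'} \setminus V_t$. This set is determined entirely by the predictions and the true stream, hence is independent of the algorithm's internal coins; it contains $Q_{t'} \setminus V_t$ in every realization because $Q_{t'} \subseteq \hat{V}_{t'}$ (the data structure at time $t'$ only ever holds predicted elements $\hat{V}_{t'}$); and its size is bounded by
$$|\hat{V}_{t'} \setminus V_t| \leq |\hat{V}_{t'} \setminus V_{t'}| + |V_{t'} \setminus V_t| \leq (\eta_{\text{old}} + 2w) + (\eta_{\text{old}} + 2w) = d,$$
where the first term is Lemma~\ref{lem:stream_robust} together with $\eta_{t'} \leq \eta_{\text{old}}$, and the second uses the phase-termination bound $|\text{Ops}^{\star}(A)| \leq \eta_{\text{old}}/2 + w$. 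This is exactly the route the paper takes. Your $\text{Ops}^{\star}(A)$-based accounting for the "deleted during the phase" piece is correct and essentially mirrors the paper's bound on $|V_{t'} \setminus V_t|$, but it is applied to the wrong ambient set.
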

\begin{proof}
We  first show that $|Q_{t'} \setminus V_t | \leq d$. 
We know that $Q_{t'} \subseteq \hV_{t'}$.
Firstly, we have $|\hV_{t'} \setminus V_{t'}| \leq \eta_t + 2w \leq \eta_{\text{old}}+2w$ due to the Lemma~\ref{lem:stream_robust}. Next, note that the number of insertions and deletions $\text{Ops}^{\star}(A)$ to $A$ between time $t'$ and $t$ is at most $\frac{\eta_{\text{old}}}{2}+w$, otherwise $A$ would have been reinitialized due to the condition of $\textsc{UpdateSolMain}$ to start a new phase. This implies that $|V_{t'} \setminus V_t |\leq \eta_{\text{old}} + 2w$.
Hence, we have that 
$$|Q_{t'} \setminus V_t| \leq |\hV_{t'} \setminus V_t | \leq  |\hV_{t'} \setminus V_{t'} | + |V_{t'} \setminus V_{t} | \leq 2(\eta_{\text{old}} + 2w) = d.$$
We conclude that $\E[f(Q_{t'} \cap V_t)]\geq (1-\epsilon)   f(Q_{t'}) \geq (1-\epsilon)   |Q_{t'}|  \frac{\gamma_{t'}}{2k}$, where the first inequality is by the robustness property of strongly-robust precomputations and the second by their value property. 
\end{proof}

\textit{Proof of Lemma~\ref{lem:apxpred}.} There are three cases.

\begin{enumerate}
\item  $|Q_{t'}| \geq k$. We have that $\E[f(S)] \geq \E[f(Q_{t'} \cap V_t)] \geq (1-\epsilon)   |Q_{t'}|  \frac{\gamma_{t'}}{2k} \geq (1-\epsilon)    \frac{\gamma_{t'}}{2}$,
where the first inequality is by monotonicity and since $S \subseteq B \cap V_t = Q_{t'} \cap V_t$, the second by Lemma~\ref{lem:Q_robustness}, and the last by  the condition for this first case.

\item $|\dynamicsol(A)| \geq (1-\epsilon)(k- |Q_{t'}|)$. Let $\sol_t = \dynamicsol(A)$ and recall that  $A$ was initialized with $\dynamicinit$ over function $f_{Q_{t'}}$ with cardinality constraint $k - |B|$ and threshold parameter $\gamma_{t'} (k - |B|)/k$. We get
\begin{align*} \E[f(S)] &=  \E[f(Q_{t'}\cap V_t) + f_{Q_{t'}\cap V_t}(\sol_t)] & \text{definition of $S$} \\
 &\geq \E[f(Q_{t'}\cap V_t) + f_{Q_{t'}}(\sol_t)]  & \text{submodularity} \ \\
&\geq (1-\epsilon)   |Q_{t'}|  \frac{\gamma_{t'}}{2k} + \E[f_{Q_{t'}}(\sol_t)] & \text{Lemma~\ref{lem:Q_robustness}} \\
& \geq (1-\epsilon)   |Q_{t'}| \frac{\gamma_{t'}}{2k}  + \frac{\gamma_{t'} (k - |B|)/k}{2(k-|B|)} |\sol_t| & \text{Definition~\ref{def:dynamic_thres}} \\
& \geq (1-\epsilon)   |Q_{t'}|  \frac{\gamma_{t'}}{2k} + \frac{\gamma_{t'}}{2k}  (1-\epsilon)(k- |Q_{t'}|) & \text{case assumption} \\
&\geq \frac{(1-\epsilon) \gamma_{t'}}{2}. & 
\end{align*}

\item $|Q_{t'}| < k$ and $|\sol_t| < (1-\epsilon)(k- |Q_{t'}|)$: 
Recall that $R_{t'} \subseteq \hV_{t'}$. Also, let $\bar{R}_t = (V_t \cap R_{t'}) \cup (V_t \setminus \hV_{t'})$.
In this case, we have that
\begin{align*}
  f(O_t) & \leq_{(a)} \E[f(O_t \cup \sol_t \cup Q_{t'})] \\
& = \E[f(\sol_t \cup Q_{t'})] +  \E[f_{\sol_t \cup Q_{t'}}(O_t\setminus \bar{R}_t)] +  \E[f_{\sol_t \cup Q_{t'} \cup (O_t\setminus \bar{R}_t)}(O_t\cap \bar{R}_t)] \\
& \leq_{(b)} \E[f(\sol_t \cup Q_{t'})] +  \E[f_{Q_{t'}}(O_t\setminus \bar{R}_t)] + \E[f_{\sol_t}(O_t\cap \bar{R}_t)] \\
& \leq_{(c)} \E[f(\sol_t \cup Q_{t'})] + |O_t\setminus \bar{R}_t| \cdot \frac{\gamma_{t'}}{2k} + \epsilon \gamma_{t'} + \E[f_{\sol_t}(O_t\cap \bar{R}_t)] \\
& \leq_{(d)} \E[f(\sol_t \cup Q_{t'})] + |O_t\setminus \bar{R}_t| \cdot \frac{\gamma_{t'}}{2k} + \epsilon \gamma_{t'} + |O_t\cap\bar{R}_t| \cdot \frac{\gamma_{t'}}{2k} + \epsilon \gamma_{t'} \\
&\leq_{(e)} \E[f(\sol_t \cup Q_{t'})] +  (1+4\epsilon) \frac{\gamma_{t'}}{2} \,.
\end{align*}
where $(a)$ is by monotonicity,
$(b)$ is by submodularity, $(c)$ is since $(Q_{t'}, R_{t'})$ are $(d= 2(\eta_{\text{old}} + 2w), \epsilon, \gamma_{t'})$-strongly robust (value property) and since we have that  $O_t \setminus  \bar{R}_t = O_t \setminus ((V_t \cap R_{t'} ) \cup (V_t \setminus \hV_{t'})) \subseteq \hV_{t'} \setminus R_{t'}$ and $Q_{t'} < k$, $(d)$ is since $\dynamic$ is threshold-based, $|\sol_t| < (1-\epsilon)(k- |Q_{t'}|)$, and $\bar{R}_t \subseteq V(A)$, and $(e)$ is since $|O_t| \leq k$. The above series of inequalities implies that $$\E[f(\sol_t \cup Q_{t'})] \geq f(O_t) - \frac{(1+4\epsilon) \gamma_{t'}}{2} = \frac{(1-4\epsilon) \gamma_{t'}}{2}.$$ We conclude that 
\begin{align*}
 \E[f(S)] & = \E[f((Q_{t'} \cup \sol_t) \cap V_t )] &  \text{definition of $S$} \\
 & = \E[f((Q_{t'}\cap V_t) \cup \sol_t  )] & \text{$\sol_t \subseteq V_t$} \\
&\geq  \E[f(Q_{t'}\cap V_t)] + \E[f_{Q_{t'}}(\sol_t)] & \text{submodularity} \\
& \geq  (1-\epsilon) f(Q_{t'}) + \E[f_{Q_{t'}}(\sol_t)]  & \text{Lemma~\ref{lem:Q_robustness}}\\
  &\geq  (1-\epsilon) \cdot \E[f(\sol_t \cup Q_{t'})] &\\
  & \geq \frac{(1-5\epsilon) \gamma_{t'}}{2} & \qed
\end{align*}
\end{enumerate}

\paragraph*{The update time} We next analyze the query complexity of \textsc{UpdateSolMain}.

% \aacomment{We will need to find a geometric grid over the possible $\eta_t$ guesses, otherwise we will suffer by a factor of $n$ due to possible slow changes in $\eta_t$ leading to a lot of re-initializations. We also need a monotonic property that $\eta_t \leq \eta_{t+1}$ for all $t$.}

\begin{restatable}{rLem}{utpred}
\label{lem:utpred} Consider the data structure $(B, A, \heta)$ returned by $\textsc{UpdateSolMain}$ at time $t$. Let $t'$ be the time 
at which $A$ was initialized and $(Q_{t'}, R_{t'})$  be the  precomputations  input to $\textsc{UpdateSolMain}$ at time $t'$. If precomputations $(Q_{t'}, R_{t'})$ are $(d= 2(\heta + 2w), \epsilon, \gamma)$ strongly-robust, 
then the total number of queries performed by $\updatesol$ during the $t - t'$ time steps between time $t'$ and time $t$ is 
$O(u((\heta + w  + k)\log k, k) \cdot (\heta + w  + k)\log k) $.
Additionally, the number of queries between time $1$ and $t$
is upper bounded by $O(u(t,k) \cdot t) $.
%$O(u(\heta/2+w, k) \cdot (\heta+w))$.
 % The amortized query complexity of $\updatepred$ is $O( \poly(\epsilon^{-1}), \log(\eta+2w), \log(k))$.
\end{restatable}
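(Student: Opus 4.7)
The plan is to bound the total queries of $\updatesol$ between time $t'$ and time $t$ by (a) bounding the number of $\dynamicins$ and $\dynamicdel$ operations performed on $A$, and (b) multiplying by the amortized per-update query cost $u(N,k)$ of the threshold-based dynamic algorithm $\dynamic$, where $N$ is the total stream length handled by $A$ during the phase. The cost of the per-time-step $\dynamicsol$ calls is absorbed into $u$ by the definition of amortized update time (total queries over $n$ updates is $n \cdot u(n,k)$).

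For the first bound, at phase initialization the algorithm inserts into $A$ the set $(R_{t'} \cap V^1_{t'}) \cup V^2_{t'}$. By the size property of the $(d,\epsilon,\gamma)$-strongly robust precomputations, $|R_{t'}| = O(\epsilon^{-2}(d+k)\log k) = O((\heta + w + k)\log k)$ (treating $\epsilon$ as a constant). A symmetric version of Lemma~\ref{lem:stream_robust} bounds $|V^2_{t'}| = |V_{t'} \setminus \hV_{t'}|$: any such element is active at $t'$ yet predicted inactive, hence its true insertion or deletion time must differ from the prediction by more than $w$, so it is incorrectly predicted, giving $|V^2_{t'}| \leq \eta \leq \heta$. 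After initialization the phase-ending test forces a reinitialization as soon as $|\text{Ops}^{\star}(A)|$ exceeds $\heta/2 + w$, so the continue-branch performs at most $\heta/2 + w + 1$ further $\dynamicins$ or $\dynamicdel$ operations on $A$. The cumulative stream length handled by $A$ during the phase is therefore $N = O((\heta + w + k)\log k)$, and the total number of operations on $A$ is of the same order. Multiplying by $u(N,k)$ yields the first claimed bound of $O(u((\heta + w + k)\log k,\,k) \cdot (\heta + w + k)\log k)$.

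For the second bound, I would replace the above instance-specific cardinality estimates by the trivial observation that after $t$ stream operations we have $|V| \leq t$, and hence $|R_{t'}|$, $|V^2_{t'}|$, and $|V(A)|$ are each bounded by $t$. Consequently the stream handled by $A$ during the phase has length at most $O(t)$ and the total number of operations on $A$ is $O(t)$, so the total queries are at most $O(u(t,k) \cdot t)$.

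The main obstacle is establishing the symmetric analog of Lemma~\ref{lem:stream_robust} needed to bound $|V_{t'} \setminus \hV_{t'}|$; the argument is essentially identical to the original but with the roles of the true and predicted times swapped in the casework on the tolerance $w$. A secondary subtlety is the bookkeeping for $|\text{Ops}^{\star}(A)|$ throughout the phase, which follows immediately from the phase-ending test combined with the fact that each stream update contributes at most one insertion or deletion to $A$ in the continue branch.
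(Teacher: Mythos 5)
Your approach is the same as the paper's: bound the number of operations on $A$ within the phase (initialization cost via the strongly-robust size property plus the bound on unpredicted active elements, continuation cost via the phase-ending test on $\text{Ops}^\star(A)$), then multiply by the amortized update cost $u(\cdot,k)$. One small slip worth fixing: you write $|V^2_{t'}| \le \eta \le \heta$, but $\heta = \eta'_{t'}$ only dominates the current observed error $\eta_{t'}$, not the final $\eta$; the correct chain — which your own reasoning (each element of $V^2_{t'}$ is already observed to be mispredicted by time $t'$) in fact supplies — is $|V^2_{t'}| \le \eta_{t'} \le \heta$, and similarly in the second bound it is $|R_{t'} \cap V^1_{t'}| \le |V_{t'}|$, not $|R_{t'}|$ itself, that is bounded by $t$.
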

\begin{proof}
The only queries made by 
$\updatesol$ are due to calls to $\dynamic$. 
Hence, we calculate the total number of operations $\text{Ops}(A)$ between time $t'$ and $t$.
The number of insertions 
at time $t'$ is
% \begin{align*}
% |R_{t'} \cap V^1_{t'}  | + |V_{t'}^2 | & =_{(1)} O((\heta + 2w) \epsilon^{-1} \log k)+ |V_{t'}^2 | \\
% & \leq_{(2)} O((\heta + 2w) \epsilon^{-1} \log k) + \heta \\
% & = O((\heta+ 2w) \epsilon^{-1} \log k) \,,
% \end{align*}
\begin{align*}
|R_{t'} \cap V^1_{t'}  | + |V_{t'}^2 | & =_{(1)} O((\heta + w  + k)\log k) + |V_{t'}^2 | \\
& \leq_{(2)} O((\heta + w  + k)\log k) + \heta \\
& = O((\heta + w  + k)\log k)),
\end{align*}
where $(1)$ is  by the size property of the $(d= 2(\heta + 2w), \epsilon, \gamma)$ strongly-robust precomputations and $(2)$ is since $|V_{t'}^2| \leq \eta_{t'} \leq \eta'_{t'} = \eta_{\text{old}}$.
%\leq \eta_t$, $\eta_t \leq \eta_{\text{old}}$ by line 4 if $t' = t$, and $\eta_{\text{old}} \geq \min\{\eta' \in \{2^i\}_{i \in \mathbb{N}} : \eta' \geq \eta_t\} \geq \eta_t$ by line 1 if $t' < t$.
% (3) is since $g$ is linear in $d$ and $d = 2(\eta_{\text{old}}+2w)$.

% The total number of insertions
% and deletions to $A$ due to calls to $\dynamicins$ and $\dynamicdel$  at lines 9 and 11 is most  
Moreover, the total number of operations $\text{Ops}^*(A)$ between time $t'+1$ and $t$
is at most $\heta/2 + w = d/4$,
otherwise $A$ would have been reinitialized due to the condition of $\textsc{UpdateSolMain}$ to start a new phase.
Hence, the total query complexity between time $t'$ and time $t$ is at most $u(O((\heta + w  + k)\log k), k- |Q_{t'}|)\cdot O((\heta + w  + k)\log k) = O(u((\heta + w  + k)\log k, k) \cdot (\heta + w  + k)\log k)$.

In the case $t'=1$, the number of operations at time $t'$ is $1$ since $|V_{t'}| = 1$,
and the number of operations from time $2$ to $t$ is at most $t-1$. 
This gives the bound of $O(u(t,k)\cdot t)$ when $t'=1$. 
\end{proof}

\section{The Precomputations Subroutine}
\label{sec:precomp}

In this section, we provide a \textsc{Precomputations} subroutine that has an improved query complexity compared to the warm-up precomputations subroutine. Recall that the warm-up  subroutine computes a robust solution over predicted elements $\hat{V}_t$, independently for all times $t$. The improved \textsc{Precomputations} does not do this independently for each time step. Instead, it relies on the following lemma that shows that the data structure maintained by the dynamic algorithm of \cite{lattanzi2020fully} can be used to find a strongly robust solution without any additional query.

\begin{restatable}{rLem}{robustfromdynamic}
\label{lem:robust_from_dynamic} %\ebcomment{Need to relate the $\epsilon$, $\gamma$ parameters for the strongly robust property to the $\epsilon$, $\gamma$ used by alg. of \cite{lattanzi2020fully}.}
Let $\dynamic(\gamma, \epsilon)$ be the dynamic submodular maximization algorithm of 
\cite{lattanzi2020fully} and  $A$ be the data structure it maintains.
There is a $\textsc{Robust1FromDynamic}$ 
algorithm such that, given as input a  deletion size parameter $d$,
and the data structure $A$ at time $t$ with $\gamma \leq \OPT_t \leq (1+\epsilon)\gamma$, it outputs  sets $(Q,R)$ that are $(d, \epsilon, \gamma)$-strongly robust 
with respect to the ground set $V_t$.
Moreover, this algorithm does not perform any oracle queries.
\end{restatable}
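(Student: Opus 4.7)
The plan is to show that the internal state of the Lattanzi et al.\ dynamic algorithm $\dynamic(\gamma,\epsilon)$ already encodes a strongly robust pair $(Q,R)$ that can be extracted without any additional oracle queries. I would set $Q := \sol_t = \dynamicsol(A)$, and construct $R$ from the sampled ``candidate buffers'' that $\dynamic$ maintains at each of its $O(\epsilon^{-1}\log k)$ levels, inflating the per-level sampling rate by a factor $\Theta((d+k)/k)$ (either by running $\dynamic$ with an inflated internal parameter, or by retaining the additional random samples at each level when $A$ is initialized) so that $|R|=O(\epsilon^{-2}(d+k)\log k)$ after summing over levels. Since all these samples were already examined by $\dynamic$, no new oracle queries are required.

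With these choices, the size property $|Q|\leq k$ and the bound on $|R|$ are immediate, and the first part of the value property $f(Q)\geq |Q|\gamma/(2k)$ is exactly the first conclusion of Lemma~\ref{lem:dynamicthres}. For the ``in addition'' clause, I would split into two sub-cases. When $|Q|<(1-\epsilon)k$, the second conclusion of Lemma~\ref{lem:dynamicthres} applied with $S\subseteq V(A_t)\supseteq V_t\setminus R$ directly yields the bound $f_Q(S)<|S|\gamma/(2k)+\epsilon\gamma$. When $(1-\epsilon)k\leq |Q|<k$, I would instead appeal to the stronger fact that by construction, every element outside $R$ has been individually tested against the current $\sol_t$ and has single-element marginal strictly below $\gamma/(2k)$; submodularity then gives $f_Q(S)\leq\sum_{a\in S}f_Q(\{a\})<|S|\gamma/(2k)$, which is even tighter than required.

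The heart of the proof is the robustness inequality $\E_Q[f(Q\setminus D)]\geq (1-\epsilon)f(Q)$ for every $D\subseteq V_t$ with $|D|\leq d$. For any fixed such $D$, the plan is to use the random sampling intrinsic to $\dynamic$ to show that $\Pr[a\in Q]=O(\epsilon/(d+k))$ for every $a\in V_t$ after the rate inflation, yielding $\E[|Q\cap D|]\leq \epsilon|Q|/k$ by linearity of expectation. Combined with submodularity and a bound of the form $f_{Q\setminus\{a\}}(\{a\})\leq O(\gamma/k)$ for each $a\in Q$ (obtained from the threshold at which $a$ was inserted together with the cap $\OPT_t\leq (1+\epsilon)\gamma$), a telescoping argument over the elements of $Q\cap D$ then gives $\E[f(Q)-f(Q\setminus D)]\leq \epsilon f(Q)$, as desired.

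The main obstacle I anticipate is the per-element marginal upper bound $f_{Q\setminus\{a\}}(\{a\})\leq O(\gamma/k)$: the threshold $\gamma/(2k)$ provides only a \emph{lower} bound on $a$'s marginal at the moment of insertion, and deriving a matching upper bound on its current marginal requires exploiting submodularity together with $\OPT_t\leq(1+\epsilon)\gamma$, most likely through an averaging argument across the $O(\epsilon^{-1}\log k)$ levels or by charging any excess to the $\epsilon\gamma$ slack permitted in the statement. A secondary obstacle is that the samplings across different levels in $\dynamic$ are not independent --- membership of $a$ in $Q$ at a given level depends on the draws at earlier levels --- so the estimate $\Pr[a\in Q]=O(\epsilon/(d+k))$ needs a careful level-by-level conditioning argument rather than a one-shot union bound.
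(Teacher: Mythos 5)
Your proposal diverges from the paper's construction in two substantive ways, and both diverge in directions that leave genuine gaps.

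First, you set $Q := \sol_t$, i.e., the \emph{entire} solution maintained by $\dynamic$. The paper instead partitions by bucket size: it only adds $S_{i,\ell}$ to $Q$ when the corresponding bucket has size $n/2^\ell > d/\epsilon + k$, and for all smaller buckets it dumps $A_{i,\ell}$ and $B_\ell$ wholesale into $R$. This partition is not cosmetic. An element drawn from a bucket of size, say, $d$ or smaller cannot satisfy any useful bound on $\Pr[e \in D]$: an adversary may simply set $D$ to cover that entire bucket. Your idea of "inflating the per-level sampling rate" does not repair this, because (a) if you actually change $\dynamic$'s internal parameter you also change the amortized update time and the correctness invariants that the rest of the paper relies on, and (b) retaining extra samples into $R$ at initialization makes $R$ bigger but does nothing to protect the elements of $Q$ coming from small buckets. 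You need, as the paper does, to \emph{exclude} those elements from $Q$ and move the entire small-bucket contents into $R$.

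Second, for the robustness inequality you propose to telescope over elements of $Q\cap D$ using a per-element marginal \emph{upper} bound $f_{Q\setminus\{a\}}(\{a\})\leq O(\gamma/k)$, and you correctly flag this as your main obstacle --- it is one, because the threshold construction only gives you a \emph{lower} bound on marginals at insertion time. The paper sidesteps this entirely: it lower-bounds $\E[f(Q\setminus D)]$ directly by $\sum_{\ell,i,j} \Pr[e_{\ell,i,j}\notin D]\cdot \tau_i$, using that each $e_{\ell,i,j}$ was chosen uniformly from a bucket of size at least $d/\epsilon$ (hence $\Pr[e\in D]\leq\epsilon$), and then compares this with $f(Q) \leq (1+\epsilon)\sum_{\ell,i,j}\tau_i$, which follows because consecutive thresholds differ by a $(1+\epsilon)$ factor. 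No per-element upper bound on current marginals is ever invoked. Also note that the target form of your intermediate bound, $\Pr[a\in Q]=O(\epsilon/(d+k))$ for \emph{every} $a\in V_t$, is stronger than what the paper uses and cannot hold unconditionally (summing over $a$ gives $\E[|Q|]\leq k$, so if $|V_t|$ is small this per-element bound is violated); the paper's level-wise size threshold is precisely what makes the uniform-sampling probability bound available.

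Your value-property argument is essentially fine in spirit (the first part is Lemma~\ref{lem:dynamicthres}(1), and the "in addition" clause can be handled via the threshold-based property), though the paper's actual argument works with a snapshot $Q'$ of the partial solution and the last call to Level-Construct rather than splitting on $|Q|<(1-\epsilon)k$ vs.\ $|Q|\geq(1-\epsilon)k$.
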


This lemma is proved in Appendix~\ref{app:precomputation}. The improved \textsc{PrecomputationsMain} subroutine runs the dynamic algorithm of \cite{lattanzi2020fully} and then, using the $\textsc{Robust1FromDynamic}$ algorithm of Lemma~\ref{lem:robust_from_dynamic}, computes a strongly-robust set from the data structure maintained by the dynamic algorithm.

\begin{algorithm}[H]
\caption{\textsc{PrecomputationsMain}}
\setstretch{1.05}
\hspace*{\algorithmicindent} \textbf{Input:} function $f : 2^V \> \R$,  constraint $k$,  predicted elements $\hat{V}_1, \ldots, \hat{V}_{n} \subseteq V$, 
% prediction error guesses $H$, 
time error tolerance $w$, parameter $\gamma$, 
parameter $h$
% guess $h$ for prediction error $\eta$
%deletion-robust sizes $E$
\begin{algorithmic}[1]
%\For{$\eta \in E$}
\State $\hat{A} \leftarrow  \textsc{DynamicInit}(f, k, \gamma)$ 
\For{$t=1$ to $n$}
% \For{$a \in \{a'  \in  \hV_{t} \setminus \hV_{t-1} : \gamma \in \Gamma(a')\}$}  
\For{$a \in\hV_{t} \setminus \hV_{t-1} $}
\State{$\dynamicins(\hat{A}, a)$}
\EndFor
\For{ $a  \in V(\hat{A}) \setminus \hV_{t}$}  
\State{$\dynamicdel(\hat{A}, a)$}
\EndFor
\If{$|V(\hat{A})| > 0$}
\State{$ Q_t, R_{t} \leftarrow \textsc{Robust1FromDynamic}(f, \hat{A}, k, 2(h + 2w))$} 
\EndIf
\EndFor
%\EndFor
%\State{\textbf{return} $\{\{(Q^\eta_t, R^\eta_t)\}_{\eta \in E}\}_{t=1}^n$}
\State{\textbf{return} $\{(Q_t, R_t)\}_{t=1}^n$}
\end{algorithmic}
\end{algorithm}

The parameters $\gamma$ and $h$ correspond to guesses for $\OPT$ and $\eta$ respectively. In In Section~\ref{sec:mainresult}, we present the \textsc{PrecomputationsFull} subroutine which calls \textsc{PrecomputationsMain} with different guesses $\gamma$ and $h$.

\begin{lemma}
\label{lem:precomputation2}
The total query complexity of the \textsc{PrecomputationsMain} algorithm is $n \cdot u(n, k)$, where 
$u(n,k)$ is the amortized query complexity of calls to $\dynamic$.
\end{lemma}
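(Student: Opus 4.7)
The plan is to observe that the only oracle queries made by \textsc{PrecomputationsMain} come from the calls to $\dynamicins$ and $\dynamicdel$ on the data structure $\hat{A}$, since by Lemma~\ref{lem:robust_from_dynamic} the subroutine $\textsc{Robust1FromDynamic}$ performs no oracle queries. Therefore it suffices to upper bound the total number of insertions and deletions performed on $\hat{A}$ over the $n$ iterations of the main loop and then multiply by the amortized per-operation cost $u(\cdot, k)$ of \dynamic.

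First, I would show that each element $a \in V$ is inserted into $\hat{A}$ at most once and deleted from $\hat{A}$ at most once throughout the whole execution. This follows from the fact that, by definition, $a \in \hat{V}_t$ if and only if $\hat{t}^+_a - w \leq t \leq \hat{t}^-_a + w$, which is a single contiguous interval of time steps. Hence there is at most one time at which $a$ enters the set $\hat{V}_t \setminus \hat{V}_{t-1}$ and at most one time at which $a$ leaves (i.e., appears in $V(\hat{A}) \setminus \hat{V}_t$). Summing over $a \in V$, the total number of operations applied to $\hat{A}$ is at most $2|V| \leq 2n$.

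Next I would bound the query cost of these operations. By assumption, \dynamic{} has amortized query complexity $u(n,k)$ per update when run over a stream of length $n$; since the stream induced on $\hat{A}$ has length at most $2n$, the total query cost incurred by all calls to $\dynamicins$ and $\dynamicdel$ on $\hat{A}$ is at most $2n \cdot u(2n, k) = O(n \cdot u(n,k))$. Combining this with the zero cost of $\textsc{Robust1FromDynamic}$ yields the claimed bound of $n \cdot u(n,k)$ (up to constants).

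I do not anticipate a real obstacle here: the bookkeeping that each predicted element has a single contiguous window of activity in $\hat{V}_t$ is immediate from the definition of $\hat{V}_t$, and the amortized-cost accounting is the standard one for black-box dynamic algorithms. The only subtlety to state carefully is that the ``$n$'' appearing in $u(n,k)$ refers to the length of the stream processed by \dynamic, which here is the stream of insertions and deletions into $\hat{A}$; since this length is $O(n)$, substituting it into $u(\cdot, k)$ keeps the bound in the claimed form.
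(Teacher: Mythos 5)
Your proof is correct and follows essentially the same approach as the paper: bound the total number of $\dynamicins$/$\dynamicdel$ operations on $\hat{A}$ by $O(n)$ (the paper writes this as $\sum_t |\hV_{t} \setminus \hV_{t-1}| + |\hV_{t-1} \setminus \hV_{t}| \leq n$, you argue each element enters and leaves $\hat{V}_t$ at most once since its predicted-active window is a contiguous interval), multiply by the amortized cost $u(\cdot,k)$, and note that $\textsc{Robust1FromDynamic}$ is query-free by Lemma~\ref{lem:robust_from_dynamic}. Your extra care in spelling out why the induced stream has length $O(n)$ is a small elaboration, not a different route.
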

\begin{proof}
It is easy to observe that the algorithm makes at most $n$ calls to 
$\dynamicins$ or $\dynamicdel$ since $\sum_t |\hV_{t} \setminus \hV_{t-1}| + | \hV_{t-1} \setminus \hV_{t} | \leq n$.
Hence, the total query complexity due to calls to  $\dynamic$ is $n \cdot u(n,k)$.
Moreover, the calls to $\textsc{Robust1FromDynamic}$ do not incur any additional queries due to Lemma~\ref{lem:robust_from_dynamic}. 
% The total query complexity of the precomputation phase is governed by the calls to the $\dynamic$ algorithm. 
Hence, the total number of queries performed in the precomputation phase is given by $n \cdot u(n,k)$.
\end{proof}

%\section{An improved approximation}
%\input{approximation}

\section{The Full Algorithm}
\label{sec:mainresult}

The \textsc{UpdateSolMain} and \textsc{PrecomputationsMain} subroutines use guesses $\gamma_t$ and $h$ for the optimal value $\OPT_t$ at time $t$ and the total prediction error $\eta$. In this section, we describe the full \textsc{UpdateSolFull} and \textsc{PrecomputationsFull} subroutines that call the previously defined subroutines over different guesses $\gamma_t$ and $h$. The parameters of these calls must be carefully designed to bound the the streaming amortized query complexity and precomputations query complexity.

\subsection{The full \textsc{Precomputations} subroutine}
\label{app:fullprecomp}

We define $H = \{n / 2^i : i \in \{\log_2(\max\{\frac{n}{k-2w}, 1\}), \cdots, \log_2(n) -1, \log_2 n \}\}$ to be a set of guesses for  the prediction error $\eta$. 
Since \textsc{PrecomputationsMain} requires a guess $h$ for  $\eta$, \textsc{PrecomputationsFull} calls \textsc{PrecomputationsMain} over all guesses $h \in H$. 
We ensure that the minimum guess $h$ is such that  $h+2w$ is at least $k$.
The challenge with the guess $\gamma$ of $\OPT$ needed for \textsc{PrecomputationsMain} is that $\OPT$ can have arbitrarily large changes between two time steps. Thus, with $\hat{V}_1, \ldots, \hat{V}_n \subseteq V$, we define the collection of guesses for $\OPT$ to be $\Gamma = \{(1+ \epsilon)^0\min_{a \in V} f(a), (1+ \epsilon)^1\min_{a \in V} f(a), \ldots, f(V)\}$, which can be an arbitrarily large set.  

Instead of having a bounded number of guesses $\gamma$, we consider a subset  $\hat{V}_t(\gamma) \subseteq \hat{V}_t$ of the predicted elements at time $t$ for each guess $\gamma \in \Gamma$ such that for each element $a$ is, there is a bounded number of guesses $\gamma \in \Gamma$ such that $a \in \hat{V}_t(\gamma)$. More precisely, for any set $T$, we define $T(\gamma) := \{a \in T: \frac{\epsilon \gamma}{k}  \leq f(a) \leq 2\gamma\}$ and
$\Gamma(a) = \{\gamma \in \Gamma: f(a) \leq \gamma \leq \epsilon^{-1} k f(a)\}.$ The \textsc{PrecomputationsFull} subroutine outputs, for every time step $t$, strongly-robust sets $(Q^{\gamma, h}_t, R^{\gamma, h}_t)$, for all guesses $\gamma \in \Gamma$ and $h \in H$. If  $\hat{V}_t(\gamma) = \emptyset$, we assume that $Q^{\gamma, h}_t = \emptyset$ and $R^{\gamma, h}_t = \emptyset.$

\begin{algorithm}[H]
\caption{\textsc{PrecomputationsFull}}
\setstretch{1.05}
\hspace*{\algorithmicindent} \textbf{Input:} function $f$,  constraint $k$,  predicted active elements $\hat{V}_1, \ldots, \hat{V}_{n} \subseteq V$,  time error tolerance $w$
%deletion-robust sizes $E$
\begin{algorithmic}[1]
%\For{$\eta \in E$}
\For{$\gamma \in \Gamma$ such that $|\hat{V}_t(\gamma)| > 0$ for some $t \in [n]$}
\For{$h \in H$}
\State{$\{(Q^{\gamma, h}_t, R^{\gamma, h}_t)\}_{t=1}^n \leftarrow \textsc{PrecomputationsMain}(\hat{V}_1(\gamma), \ldots, \hat{V}_{n}(\gamma), \gamma, h)$}
\EndFor
\EndFor
\State{\textbf{return} $\{\{(Q^{\gamma, h}_t, R^{\gamma, h}_t)\}_{\gamma \in \Gamma, h \in H}\}_{t=1}^n$}
\end{algorithmic}
\label{alg:tgreedy_wu}
\end{algorithm}

\begin{lemma}
\label{lem:precompqc}
The total query complexity of \textsc{PrecomputationsFull} is $$O(n \cdot \log(n) \cdot  \log(k) \cdot u(n,k)).$$
\end{lemma}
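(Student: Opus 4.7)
The plan is to decompose the total cost as a sum over pairs $(\gamma,h)$, control each summand by reusing the reasoning of Lemma~\ref{lem:precomputation2}, and then run a counting argument showing that each element contributes to only $O(\log k)$ guesses $\gamma$. Concretely, for a single call $\textsc{PrecomputationsMain}(\hat{V}_1(\gamma),\ldots,\hat{V}_n(\gamma),\gamma,h)$, the argument in the proof of Lemma~\ref{lem:precomputation2} shows that the total cost is $m_\gamma\cdot u(n,k)$, where
\begin{equation*}
m_\gamma\;:=\;\sum_{t=1}^n\big(|\hat{V}_t(\gamma)\setminus\hat{V}_{t-1}(\gamma)|+|\hat{V}_{t-1}(\gamma)\setminus\hat{V}_t(\gamma)|\big)
\end{equation*}
bounds the total number of $\dynamicins$/$\dynamicdel$ operations, and the calls to $\textsc{Robust1FromDynamic}$ are free by Lemma~\ref{lem:robust_from_dynamic}. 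Membership of an element $a$ in $\hat{V}_t(\gamma)$ depends only on the predicted insertion/deletion times of $a$ and the time-independent condition $\epsilon\gamma/k\le f(a)\le 2\gamma$, so each $a$ enters and leaves $\hat{V}_\cdot(\gamma)$ at most once over the stream; hence $m_\gamma\le 2N_\gamma$ where $N_\gamma:=|\bigcup_t\hat{V}_t(\gamma)|$.

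Next, I would bound the outer iteration ranges. The set $H$ consists of powers of two between $\max(n/(k-2w),1)$ and $n$, so $|H|=O(\log n)$. For $\Gamma$, however, $|\Gamma|$ can be as large as $\log_{1+\epsilon}(f(V)/\min_a f(a))$, which is not $\polylog(n,k)$, so bounding $|\Gamma|$ directly would be hopeless. Instead, I would swap summation orders to control $\sum_\gamma N_\gamma$:
\begin{equation*}
\sum_{\gamma\in\Gamma}N_\gamma\;\le\;\sum_{a\in V}\big|\{\gamma\in\Gamma:\tfrac{\epsilon\gamma}{k}\le f(a)\le 2\gamma\text{ and }a\in\textstyle\bigcup_t\hat{V}_t\}\big|.
\end{equation*}
The bracketing condition on $\gamma$ carves out an interval $[f(a)/2,\,kf(a)/\epsilon]$ of multiplicative length $2k/\epsilon$, and since $\Gamma$ is geometric with ratio $1+\epsilon$, at most $O(\epsilon^{-1}\log(k/\epsilon))=O(\log k)$ values of $\gamma$ land inside for constant $\epsilon$. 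Only elements appearing in $\bigcup_t\hat{V}_t$ contribute, and the same event-counting used in the proof of Lemma~\ref{lem:precomputation2} shows $|\bigcup_t\hat{V}_t|=O(n)$, giving $\sum_\gamma N_\gamma=O(n\log k)$.

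Putting everything together,
\begin{equation*}
\text{total cost}\;\le\;\sum_{h\in H}\sum_{\gamma\in\Gamma}2N_\gamma\cdot u(n,k)\;=\;O(|H|)\cdot u(n,k)\cdot\sum_\gamma N_\gamma\;=\;O(n\cdot\log n\cdot\log k\cdot u(n,k)),
\end{equation*}
matching the claim. The main obstacle, I expect, is the swap-of-summation step: a naive $\sum_\gamma N_\gamma\le|\Gamma|\cdot n$ is far too weak because $|\Gamma|$ is a priori unbounded, so the whole point of thresholding $\hat{V}_t(\gamma)$ by the marginal $f(a)$ is precisely to force each element to charge its cost to only the $O(\log k)$ relevant buckets; getting this charge to tile correctly against the ratio-$(1+\epsilon)$ grid is the one place where definitions have to be pinned down carefully.
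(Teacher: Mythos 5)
Your proof is correct and follows essentially the same route as the paper's: decompose the cost over pairs $(\gamma,h)$, apply Lemma~\ref{lem:precomputation2} to bound each call by the length of its sub-stream, pull out $|H|=O(\log n)$, and then swap the order of summation so that each element $a$ is charged only to the $O(\log k)$ guesses $\gamma\in\Gamma$ for which $a$ passes the threshold test — exactly the counting via $\sum_a|\Gamma(a)|=O(n\log k)$ that the paper uses. You are slightly more explicit than the paper about why the per-$\gamma$ stream length is $O(|\hat V(\gamma)|)$ (the membership interval in $t$ is contiguous because $\hat V_t$ is defined by a time window), but the decomposition and the key swap are the same.
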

\begin{proof}
For any $\gamma \in \Gamma$, let $n^{\gamma}$ be the length of the stream corresponding to predicted active elements $\hat{V}_1(\gamma), \ldots, \hat{V}_{n}(\gamma)$. By Lemma~\ref{lem:precomputation2}, the total query complexity is 
\begin{align*}
    \sum_{\gamma \in \Gamma} |H| n^{\gamma} u(n^{\gamma}, k)  &\leq \sum_{\gamma \in \Gamma} \log(n) \cdot 2|\hat{V}(\gamma)|   \cdot u(n, k)  \\
    &= 2 (\log n)  u(n, k)  \sum_{a \in \hat{V}}  |\Gamma(a)|    \\
    & = 2 (\log n)  u(n, k)  n  (\log k). \qedhere
\end{align*}
\end{proof}

\subsection{The full \textsc{UpdateSol} subroutine}
\label{app:fullupdatesol}

$\textsc{UpdateSolFull}$ takes as input the  strongly robust sets $(Q^{\gamma, h}_t, R^{\gamma, h}_t)$, for all guesses $\gamma \in \Gamma$ and $h \in H$. It maintains a data structure $\{(B^\gamma, A^{\gamma}, \eta^{\gamma})\}_{\gamma \in \Gamma}$ where $(B^\gamma, A^{\gamma}, \eta^{\gamma})$ is the data structure maintained by \textsc{UpdateSolMain} over guess $\gamma$ for $\OPT$. \textsc{UpdateSolMain} is called over all guesses $\gamma \in \Gamma$ such that there is at least one active element $ a \in V_t(\gamma)$. The precomputations given as input to \textsc{UpdateSolMain} with guess $\gamma \in \Gamma$ are $(Q^{\gamma,\eta'_t}_t,  R^{\gamma,\eta'_t}_t)$ where $\eta'_t \in H$ is the closest guess  in $H$ to the current prediction error $\eta_t$ that has value at least $\eta_t$. 
Note that $\eta'_t$ is such that $\eta'_t + 2w \geq k$ due to definition of $H$.
The solution returned at time $t$ by $\textsc{UpdateSolFull}$ is the best solution found by \textsc{UpdateSolMain} over all guesses $\gamma \in \Gamma$.

\begin{algorithm}[H]
\setstretch{1.05}
\caption{$\textsc{UpdateSolFull}$}
\hspace*{\algorithmicindent} \textbf{Input:} function $f : 2^V \> \R$, data structure $A$,  constraint $k$,  precomputations $P_t = \{(Q^{\gamma, \eta}_t,  R^{\gamma, \eta}_t)\}_{\gamma \in \Gamma, \eta \in H}$, time $t$, current prediction error $\eta_t$,  $V^1_t$, $V^2_t$, $\hV_t$
\begin{algorithmic}[1]
\State{$\{A^{\gamma}\}_{\gamma \in \Gamma} \leftarrow A$}
\State{$\eta'_t \leftarrow \min\{h \in H: h \geq \eta_t\}$}
\For{$\gamma \in \Gamma$ such that $|V_t(\gamma)| > 0$}
\State{$A^{\gamma}, S^{\gamma} \leftarrow \textsc{UpdateSolMain}(f, A^{\gamma}, k, (Q^{\gamma,\eta'_t}_t,  R^{\gamma,\eta'_t}_t),  t, \eta'_t, V^1_t(\gamma), V^2_t(\gamma), V_{t-1}(\gamma), \gamma)$}
\EndFor
\State{$A \leftarrow \{A^{\gamma}\}_{\gamma \in \Gamma}$}
\State{$S \leftarrow \argmax_{\gamma \in \Gamma : |V_t(\gamma)| > 0}f(S^{\gamma}) $}
\State{\textbf{return} $A, S$} 
\end{algorithmic}
\label{alg:updatesolfull}
\end{algorithm}

\begin{lemma}
\label{lem:utfpred}
The $\textsc{UpdateSolFull}$ 
%\ebcomment{todo: change this to be for \textsc{UpdateSol}} 
algorithm has, over the $n$ time-steps of the streaming phase, an amortized query complexity of  
$O\left(\log^2(k) \cdot u(\eta + w + k, k) \right)$. 
 % The amortized query complexity of $\updatepred$ is $O( \poly(\epsilon^{-1}), \log(\eta+2w), \log(k))$.
\end{lemma}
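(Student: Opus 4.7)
The plan is to bound, for each guess $\gamma \in \Gamma$ separately, the total query cost of the calls to \textsc{UpdateSolMain} on $A^\gamma$ over the streaming phase, then sum over $\gamma$ and divide by $n$. All query costs are expressed as the number of \textsc{DynamicIns}/\textsc{DynamicDel} operations on $A^\gamma$ times the amortized per-operation cost. Because the geometric set $H$ satisfies $\eta'_{t'} \leq 2\eta_{t'} \leq 2\eta$ and $\eta'_{t'} + 2w \geq k$, Lemma~\ref{lem:utpred} ensures that the stream handled by $A^\gamma$ within any single phase has length $O((\eta+w+k)\log k)$, so the per-operation amortized cost is $u((\eta+w+k)\log k, k) = u(\eta+w+k, k)$ (the $\log k$ factor inside the argument is absorbed because $u$ is polylogarithmic in its first argument).

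I would then separate the operations on each $A^\gamma$ into \emph{mid-phase operations} (the \textsc{DynamicIns}/\textsc{DynamicDel} triggered in the \textsc{Continue} branch) and \emph{phase-start operations} (the batch of insertions performed when a new phase is initialized). For the mid-phase operations, each time step of the underlying stream changes $V_t$ by a single element, hence $V_t(\gamma)$ by at most one element, so each time step contributes at most one mid-phase operation to $A^\gamma$. Moreover, an element $a$ appears in $V(\gamma)$ only when $\gamma \in \Gamma(a)$, and by the geometric spacing of $\Gamma$ one has $|\Gamma(a)| = O(\log k)$. Summing across $\gamma$ therefore yields
\[
\sum_{\gamma \in \Gamma}(\text{mid-phase ops on } A^\gamma) \;\leq\; \sum_{a \in V} 2|\Gamma(a)| \;=\; O(n\log k).
\]

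For the phase-start operations I would use the fact that any phase which is not the final phase for its $\gamma$ must accumulate more than $\eta_{\text{old}}/2 + w$ mid-phase operations before it is closed. Combining the size property of strong robustness with $|V_{t'}^2(\gamma)| \leq \eta_{t'} \leq \eta'_{t'}$, the phase-start batch $(R^{\gamma,\eta'_{t'}}_{t'} \cap V^1_{t'}(\gamma)) \cup V^2_{t'}(\gamma)$ has size $O((\eta'_{t'}+w+k)\log k)$, and using $k \leq \eta'_{t'}+2w$ this equals $O(\log k)$ times $\eta_{\text{old}}/2 + w$. Charging the phase-start cost of each non-final phase to that phase's own mid-phase operations thus inflates the mid-phase total by an $O(\log k)$ factor, giving $O(n\log^2 k)$ phase-start operations across all $\gamma$; the final phase of each ever-active $\gamma$ adds only one further batch of the same size and is absorbed. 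Multiplying the overall operation count $O(n\log^2 k)$ by the per-operation cost $u(\eta+w+k,k)$ and dividing by $n$ gives the claimed $O(\log^2(k)\cdot u(\eta+w+k,k))$ amortized query complexity. The main obstacle I anticipate is the simultaneous amortization across the two geometric structures (the $\Gamma$-guesses for $\OPT$ and the $H$-guesses for $\eta$), in particular ensuring that the precomputations handed to \textsc{UpdateSolMain} at each phase start are actually $(2(\eta'_{t'}+2w),\epsilon,\gamma)$-strongly robust as required by Lemma~\ref{lem:utpred}, and that the final-phase remainder does not distort the charging scheme.
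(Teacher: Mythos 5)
Your proposal is essentially the same argument as the paper's, just framed as an explicit charging scheme (mid-phase operations vs.\ phase-start batches) rather than the paper's per-phase cost bound divided by the guaranteed number of non-null operations per phase. Both hinge on the same ingredients: the per-phase bound from Lemma~\ref{lem:utpred}, the fact that $\sum_{\gamma} n^{\gamma} = \sum_{a}|\Gamma(a)| = O(n\log k)$, the observation that a non-final phase absorbs more than $\eta_{\text{old}}/2+w$ operations, and the design choice that the smallest guess in $H$ satisfies $h + 2w \geq k$.

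One imprecision worth flagging: your claim that ``$H$ satisfies $\eta'_{t'} \leq 2\eta_{t'} \leq 2\eta$'' is not true in general, since the minimum of $H$ is $\max\{k-2w,1\}$, so $\eta'_{t'}$ can equal $k-2w$ even when $\eta_{t'} = 0$. The correct statement, which the paper handles by a two-case argument ($\eta + 2w > k$ versus $\eta + 2w \leq k$), is only that $\eta'_{t'} + w + k = O(\eta + w + k)$. Your downstream conclusion is unaffected, but the stated inequality as written is false. Also, the claim that the final phase of each ever-active $\gamma$ ``adds only one further batch of the same size and is absorbed'' needs the extra observation that the phase-start batch on $A^{\gamma}$ consists only of elements in $V_{t'}(\gamma)$ and hence has size at most $n^{\gamma}$; without this, summing one batch of worst-case size $\Theta((\eta'+w+k)\log k)$ over the potentially $\Theta(n\log k)$ ever-active $\gamma$'s would exceed the claimed bound. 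With that refinement, your charging scheme goes through and matches the paper's calculation.
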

\begin{proof}

%We calculate the query complexity of a call to $\updatepred$,
%amortized over $n$ time steps.
%Consider any guess $\eta_t$ for $\eta$. 
For every $\gamma \in \Gamma$, we consider the following stream associated to $\gamma$:
$\{(a'_t, o'_t)\}_{t=1}^n$ 
where $
(a'_t, o'_t) =  (a_t, o_t) $ if  $\gamma \in \Gamma(a_t)$ and $(a'_t, o'_t) = (\NULL, \NULL)$ otherwise. $\NULL$ is a dummy symbol denoting an 
absence of an insertion or deletion.
Let $n^\gamma$ be the number of insertion/deletions in this new stream, i.e.,
$n^\gamma = \sum_{t=1}^n \1[a_t \neq \NULL]$.

Using Lemma~\ref{lem:utpred} the total query complexity 
due to a single phase corresponding to a $(\gamma, \eta^{\gamma})$ pair is $O(u((\eta^{\gamma} + w  + k)\log k, k) \cdot (\eta^{\gamma} + w  + k)\log k)$.
% due to a single phase corresponding to a $(\gamma, \eta^{\gamma})$ pair is $O(u(\eta^{\gamma}+2w, k) \cdot (\eta^{\gamma}+2w))$.
% We will bound the total query complexity of 
% each phase in the algorithm and then bound the number of phases.
Next, note that there is a one-to-one mapping between  insertions/deletions to $A^\gamma$ at line 9 or line 11 of \textsc{UpdateSol} and elements of  the stream $\{(a'_t, o'_t)\}_{t = 1}^{n}$. Thus, since a phase ends when $|\text{Ops}^{\star}(A)| > \eta^{\gamma} / 2 + w$, all phases except the last phase process at least $\eta^{\gamma} / 2 + w$ non-null elements from 
the stream $\{(a'_t, o'_t)\}_{t = 1}^{n}$.
% \ebcomment{What if there is only one very short phase associated with $\gamma$?} 
Thus, if there are at least two phases corresponding to $n^\gamma$ then the total query complexity over these $\eta^{\gamma} / 2 + w$ non-null elements is $O(u(\eta^{\gamma}  + w + k, k) \cdot  (\eta^{\gamma} + w  + k)\log k \cdot \frac{n^\gamma}{\eta^{\gamma}/2 + w})$. 
Since, $\eta^{\gamma}/2 + w > k$ we have that the
amortized query complexity over $n^{\gamma}$ non-null
elements is $O(u(\eta^{\gamma}  + w + k, k) \cdot  \log k)$.
We also have that $O(u(\eta^{\gamma}  + w + k, k) \cdot  \log k) = O(u(\eta + w + k, k)\cdot \log k)$ because either $\eta + 2w > k$ which implies that $\eta^{\gamma} + w = O(\eta+w)$ or
$\eta + 2w < k$ in which case  $\eta^{\gamma}  + w + k = O(k)$.

If there is only one phase, i.e.\ when $n^\gamma \leq \eta^\gamma/2 + w$, then the amortized query complexity for the first phase is upper bounded by $u(n^\gamma, k) = O(u(\eta^{\gamma} / 2 + w, k))$. 
Thus, the amortized query complexity due to $\gamma$ is $O(u(\eta + w + k, k) \log k)$, for any $\gamma \in \Gamma$. 
We get that the total query complexity is
\begin{align*}
\sum_{\gamma \in \Gamma} O(u(\eta+ w+k,k) \cdot n^\gamma \cdot \log k) &=
\sum_{\gamma \in \Gamma} \sum_{a \in \Gamma(a)}  O(u(\eta + w+k,k) \log k)  \\
&=
\sum_{a \in V} \sum_{\gamma \in \Gamma(a)}  O(u(\eta + w + k,k) \log k)\\
& =   O(n \log^2 k) \cdot u(\eta + w + k,k) \qedhere
\end{align*}
% \ebcomment{There used to be a $\log \eta$ term that I got rid of with the new proof, it would be good to check.}
\end{proof}

\subsection{The main result}

By combining the algorithmic framework (Algorithm~\ref{alg:framework}) together with subroutines \textsc{UpdateSolFull} and \textsc{PrecomputationsFull}, we obtain our main result.

\begin{restatable}{rThm}{thmmain}
\label{thm:main}
Algorithm~\ref{alg:framework} with subroutines \textsc{UpdateSolFull} and \textsc{PrecomputationsFull}   is a dynamic algorithm with predictions that, for any tolerance $w$ and constant $\epsilon > 0$, achieves  an amortized query complexity during the streaming phase of $O(\poly(\log \eta, \log w, \log k))$,  
 an approximation of $1/2 - \epsilon$ in expectation, and a query complexity of $\tilde{O}(n)$ during the precomputation phase. 
\end{restatable}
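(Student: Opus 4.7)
The plan is to establish the three conclusions of the theorem separately by instantiating the lemmas already proved for \textsc{UpdateSolFull} and \textsc{PrecomputationsFull} with the Lattanzi et al.~\cite{lattanzi2020fully} dynamic algorithm as the black-box \textsc{Dynamic}. Concretely, I will plug $u(n,k) = O(\poly(\log n, \log k))$ into Lemma~\ref{lem:precompqc} to obtain the $\tilde O(n)$ precomputation bound, into Lemma~\ref{lem:utfpred} to obtain the amortized streaming bound of $O(\poly(\log \eta, \log w, \log k))$ (using that $\log(\eta+w+k) = O(\log\eta + \log w + \log k)$), and use Lemma~\ref{lem:apxpred} to derive the $(1/2-\epsilon)$ expected approximation for a carefully chosen guess $\gamma \in \Gamma$.

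For the approximation, I would first argue that for every $t$ there is at least one $\gamma \in \Gamma$ with $\gamma \leq \OPT_t \leq (1+\epsilon)\gamma$, since $\Gamma$ is a geometric grid of base $(1+\epsilon)$ covering the interval $[\min_a f(a), f(V)]$. Next I would show that restricting attention to the bucketed ground set $\hat V_\tau(\gamma) = \{a \in \hat V_\tau : \epsilon\gamma/k \leq f(a) \leq 2\gamma\}$ is without much loss: by monotonicity and submodularity, any element with $f(a) < \epsilon\gamma/k$ contributes at most $\epsilon\gamma$ to any solution of size $k$, and (by the choice of $\gamma$) no element of the optimum can have $f(a) > 2\gamma \geq 2\OPT_t$. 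So $\OPT(V_t(\gamma)) \geq (1 - O(\epsilon)) \OPT_t$.

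I will then verify, using Lemma~\ref{lem:robust_from_dynamic} and the fact that \textsc{PrecomputationsMain} is called with guess $h = \eta'_t \geq \eta_t$, that the pair $(Q_t^{\gamma,\eta'_t}, R_t^{\gamma,\eta'_t})$ supplied to \textsc{UpdateSolMain} at time $t$ is $(2(\eta'_t + 2w), \epsilon, \gamma)$-strongly robust with respect to $\hat V_t(\gamma)$; combined with Lemma~\ref{lem:dynamicthres} (so the threshold-based hypothesis of Lemma~\ref{lem:apxpred} holds) this lets me invoke Lemma~\ref{lem:apxpred} to conclude that the partial solution $S^\gamma$ returned at time $t$ satisfies $\E[f(S^\gamma)] \geq \frac{1-5\epsilon}{2}\gamma \geq \frac{1-5\epsilon}{2(1+\epsilon)} \OPT(V_t(\gamma))$. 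Since \textsc{UpdateSolFull} returns the best $S^\gamma$ across all $\gamma$, combining with the $(1-O(\epsilon))$ bucketing loss yields the $(1/2 - \epsilon')$ guarantee after rescaling $\epsilon$.

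The main obstacle I anticipate is bookkeeping the interaction between the \emph{phase reinitialization} logic in \textsc{UpdateSolMain} and the quantity $\eta'_t$ that is used to index the precomputed pair $(Q_t^{\gamma,\eta'_t}, R_t^{\gamma,\eta'_t})$: the guess $\eta_{\text{old}}$ stored inside the data structure at the \emph{start} of a phase at time $t'$ is $\eta'_{t'}$, so one has to check that the deletion parameter $d = 2(\eta_{\text{old}} + 2w)$ required by the strong robustness hypothesis of Lemma~\ref{lem:apxpred} is indeed what \textsc{PrecomputationsMain} used at $t'$, and that this same $d$ dominates $|Q_{t'} \setminus V_t|$ throughout the phase (which is precisely what the phase-end condition $|\text{Ops}^\star(A)| > \eta_{\text{old}}/2 + w$ ensures, together with Lemma~\ref{lem:stream_robust}). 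Once these index-matching details are confirmed, the three pieces assemble cleanly into Theorem~\ref{thm:main}.
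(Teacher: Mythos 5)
Your overall decomposition — plug the Lattanzi et al.\ update time into Lemma~\ref{lem:utfpred} and Lemma~\ref{lem:precompqc} for the two complexity bounds, then choose a good guess $\gamma\in\Gamma$, verify the strong-robustness and threshold-based hypotheses, and invoke Lemma~\ref{lem:apxpred} — is exactly the structure of the paper's proof, and your flagging of the $\eta'_{t'}$ versus $\eta'_t$ bookkeeping is precisely the right concern. However, there is a genuine (if small) error in the choice of $\gamma$. You pick $\gamma$ with $\gamma \leq \OPT_t \leq (1+\epsilon)\gamma$ and then observe that bucketing loses an $O(\epsilon)$ factor, so $\OPT(V_t(\gamma)) \geq (1-O(\epsilon))\OPT_t$. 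But both Definition~\ref{def:dynamic_thres} and Lemma~\ref{lem:apxpred} require the \emph{lower} bound $\gamma \leq \OPT(V_t(\gamma))$ — the relevant optimum is the one over the bucketed ground set that \textsc{Dynamic} actually processes — and your choice only yields $\OPT(V_t(\gamma)) \geq (1-O(\epsilon))\gamma$, which is strictly weaker. The paper avoids this by setting $\gamma^\star = \max\{\gamma \in \Gamma : \gamma \leq (1-\epsilon)\OPT_t\}$, i.e., one grid step \emph{lower}, so that the sandwich $\gamma^\star \leq (1-\epsilon)\OPT_t \leq \OPT(V_t(\gamma^\star)) \leq \OPT_t \leq \frac{1+\epsilon}{1-\epsilon}\gamma^\star$ goes through with an adjusted $\epsilon'$ satisfying $1+\epsilon' = (1+\epsilon)/(1-\epsilon)$. (Also, minor: the inequality ``$f(a) > 2\gamma \geq 2\OPT_t$'' is written in the wrong direction; the intended fact is $f(a) \leq \OPT_t \leq (1+\epsilon)\gamma \leq 2\gamma$.) Once you pick $\gamma$ on the low side of $\OPT_t$ rather than straddling it, the rest of your argument assembles into the paper's proof.
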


\begin{proof} The dynamic algorithm \textsc{Dynamic} used by the \textsc{Precomputations} and \textsc{UpdateSol} subroutines is the algorithm of Lattanzi et al.~\cite{lattanzi2020fully} with amortized update time $u(n, k) = O(\poly(\log n, \log k))$. By Lemma~\ref{lem:utfpred}, the amortized query complexity is $$O\left(\log^2(k) \cdot u(\eta + 2w + k, k) \right) = O\left(\poly(\log(\eta + w + k),  \log k) \right).$$ 

For the approximation, consider some arbitrary time $t$. Let $\gamma^{\star} = \max\{\gamma \in \Gamma : \gamma \leq (1-\epsilon)\OPT_t  \}$.  Let $\OPT'_t := \max_{S \subseteq V_t(\gamma^{\star}) : |S| \leq k} f(S)$. We have that
$$\OPT'_t  \geq f(O_t \cap V_t(\gamma^{\star})) \geq_{(1)} f(O_t) - \sum_{o \in O_t \cap V_t(\gamma^{\star})} f(o) \geq_{(2)} f(O_t) - k \cdot  \frac{\epsilon \gamma^{\star}}{k} \geq_{(3)} (1 - \epsilon) \OPT_t $$
where $(1)$ is by submodularity, $(2)$ is by definition of $V_t(\gamma^{\star})$, and $(3)$ by definition of $\gamma^{\star}$. Consider the calls to $\textsc{UpdateSol}$ by $\textsc{UpdateSolFull}$ with $\gamma = \gamma^{\star}$. Let $t'$ be time at which $A_t^{\gamma^{\star}}$ was initialized by  $\textsc{UpdateSol}$ with precomputations $(Q_{t'}, R_{t'}) = (Q^{\gamma^{\star},\eta'_{t'}}_{t'},  R^{\gamma^{\star},\eta'_{t'}}_{t'})$, where this equality is by definition of $\textsc{UpdateSolFull}$. 

Next, we show that the conditions to apply Lemma~\ref{lem:apxpred} are satisfied.
By definition of $\textsc{PrecomputationsFull}$ and Lemma~\ref{lem:robust_from_dynamic}, $(Q_{t'}, R_{t'})$ are $(d = 2(\eta'_{t'} + 2w), \epsilon, \gamma^{\star})$-strongly robust with respect to $V_t(\gamma^{\star})$ with $\gamma^{\star} = \gamma_{t'}$. Since $\eta'_{t'} \geq \eta_{t'},$ we have that $(Q_{t'}, R_{t'})$ are $(d = 2(\eta_{\text{old}} + 2w), \epsilon, \gamma^{\star})$. We also have that 
$$\gamma_{t'} \leq (1-\epsilon) \OPT_t \leq \OPT'_t \leq \OPT_t \leq  (1+ \epsilon)\gamma_{t'} / (1-\epsilon).$$ Thus, with $\epsilon' > 0$ such that $(1 + \epsilon') = (1+ \epsilon) / (1-\epsilon)$, we get $\gamma_{t'} \leq \OPT'_t\leq  (1+ \epsilon')\gamma_{t'}.$ By Lemma~\ref{lem:dynamicthres}, \textsc{Dynamic} is a threshold-based algorithm. Thus, all the conditions of Lemma~\ref{lem:apxpred} are satisfied and we get that the solution $S^{\gamma^{\star}}_t$ returned by the call to $\textsc{UpdateSol}$ with $\gamma = \gamma^{\star}$ at time $t$, where $A_t^{\gamma^{\star}}$ was initialized by  $\textsc{UpdateSol}$ at time $t'$ with $\gamma = \gamma^{\star}$, is such that $\E[f(S^{\gamma^{\star}}_t)] \geq \frac{1- 5 \epsilon'}{2} \gamma^{\star}  \geq (1- \epsilon)\frac{1- 5 \epsilon'}{2} \OPT_t.$ Finally, since $\textsc{UpdateSolFull}$ returns, among all the solutions returned by $\textsc{UpdateSol}$, the one with highest value, it returns a set $S_t$ such that $\E[f(S_t)] \geq \E[f(S^{\gamma^{\star}}_t)] \geq (1- \epsilon)\frac{1- 5 \epsilon'}{2} \OPT_t.$  Finally, the precomputation query complexity is by Lemma~\ref{lem:precompqc} with $u(n, k) = O(\poly(\log n, \log k)$.
\end{proof}

\section*{Acknowledgements}

Eric Balkanski was supported by NSF grants CCF-2210502 and IIS-2147361.

\newpage

\bibliographystyle{plainnat}
\bibliography{refs}

\newpage

\appendix

\section{Missing Proofs from Section~\ref{sec:warm-up}}
\label{app:warmup}

\lemstreamrobust*
\begin{proof}
Let $E = \{a \in V: |\hat{t}^+_a - t^+_a| > w \text{ or } |\hat{t}^-_a - t^-_a| > w\}$.
Hence, for all $a \not \in E$, we have $|\hat{t}^+_a - t^+_a| \leq w \text{ and } |\hat{t}^-_a - t^-_a| \leq w$.
By Definition~\ref{def:error}, we have that $|E| = \eta$.
We first note that any $a \in \hV_t \setminus E$ such that 
$\hat{t}_a^+ \leq t - w$  and $\hat{t}_a^- > t+w$ also belongs to $V_t$. This is because $a \not\in E$ implies that $t_a^+ \leq t$ and $t_a^- > t$.
Hence, the only elements that are present in $\hV_t \setminus E$ but are absent from $V_t$ are such that
$\hat{t}_a^+ > t - w$  or $\hat{t}_a^- \leq t+w$.
Since, the only elements in $\hat{V}_{t}$ are such that  $\hat{t}^+_a \leq t+ w \text{ and } \hat{t}^-_a \geq t - w$, this implies that there can be at most $2w$ such elements. 
Combining with the fact that $|E| = \eta$,
we get $|\hV_t \setminus V_t | \leq |E| + |(\hV_t \setminus E) \setminus V_t| \leq \eta + 2w$.
% We first note that any $ V_t \setminus E \subseteq \hV_t$ because $a \in V_t \setminus E$
% implies that $t_a^+ \leq t$ and $t_a^- > t$.
% This, combined with the definition of $E$, satisfies the inclusion condition of $\hat{t}_a^+ \leq t + w$ and $\hat{t}_a^- > t-w$.
% Now, the only elements in $\hV_t \setminus V_t$ 
% can only be due to $E$.

% such that 
% $t_a^+ \leq t-w$ will belong to $\hV_t$.
% Hence, the only items that 
% Next, we note that if $t_a^+ \in \hat{V}_t \setminus \{\hat{a}_i : i \in [t-w+1, t+w]\}$, then $a \in \hat{V}_i$ for all $i \in [t-w, t+w]$. Next, if $a \in \hat{V}_i$ for all $i \in [t-w, t+w]$ and $a \not \in E$, then we have that $a \in V_t$.  By combining the previous steps, we get  $\hat{V}_t^r \setminus (\{\hat{a}_i : i \in [t-w+1, t+w]\} \cup E) \subseteq V_t$. We conclude that
% $$\hat{V}_t^r \setminus V_t \leq |(\{\hat{a}_i : i \in [t-w+1, t+w]\} \cup E| \leq 2w + \eta. \qedhere$$
\end{proof}

\robustwu*
\begin{proof}
We first show that  at every time step $t$,  $f(S^1)  \geq \alpha_1  \cdot \OPT(V_t^1)$. Fix some time step $t$. First, note that
$$P_t \setminus (\hat{V}_t \setminus V^1_t) = (P_t \setminus \hat{V}_t) \cup (P_t \cap V^1_t) =  P_t \cap V^1_t$$
where the second equality is since $P_t \subseteq \hV_t$. In addition, we have that $|\hat{V}_t \setminus V^1_t| = |\hat{V}_t \setminus V_t| \leq \eta + 2w = d$, where the inequality is by Lemma~\ref{lem:stream_robust}.  With $D = \hat{V}_t \setminus V^1_t$ such that $|D| \leq d$, $R = P_t$, and $N = \hV_t$, we thus have by Definition~\ref{def:robust} that the output $S^1$ of $\robusttwo(f, P_t, \hat{V}_t \setminus V^1_t, k) = \robusttwo(f, R, D, k)$ is such that $S^1 \subseteq P_t \cap V^1_t$ and
$$ \E[f(S^1)] \geq 
\alpha_1  \cdot \max_{\substack{S \subseteq \hat{V}_t \setminus D: \\ |S| \leq k}} f(S) 
= \alpha_1 \cdot  \max_{\substack{S \subseteq V^1_t: \\ |S| \leq k}} f(S) = \alpha_1 \cdot   \OPT(V^1_t).$$

Next, we show that at every time step $t$,  $f(S^2)  \geq \alpha_2 \cdot \OPT(V^2_t)$. Observe that the algorithm runs the dynamic streaming algorithm \dynamic \ over the sequence of ground sets $V_0^u, \ldots, V_n^u$.
By Definition~\ref{def:dynamic}, the solutions returned by \dynamic \ at each time step achieve a  $\alpha_2$ approximation. Since these solutions are $S^u_1, \ldots, S^u_n$, we get that for every time step $t$, $  f(S_t^u) \geq \alpha_2 \max_{S \subseteq V^u_t: |S| \leq k} f(S).$
\end{proof}

\begin{lemma}
\label{lem:partition_approx_wu}
    For any monotone submodular function $f: 2^{N_1 \cup N_2} \rightarrow \R$, if $S_1$ and $S_2$ are such that $f(S_1) \geq \alpha_1 \cdot \max_{S \subseteq N_1: |S| \leq k} f(S)$ and $f(S_2) \geq \alpha_2 \cdot \max_{S \subseteq N_2: |S| \leq k} f(S)$, then $\max\{ f(S_1), f(S_2)\} \geq \frac{1}{2} \cdot\min\{\alpha_1, \alpha_2\} \cdot \max_{S \subseteq N_1 \cup N_2: |S| \leq k} f(S)$.    
\end{lemma}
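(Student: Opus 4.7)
The plan is to use the standard ``split the optimum'' argument for monotone submodular functions. Let $O^\star \in \argmax_{S \subseteq N_1 \cup N_2, |S| \leq k} f(S)$ denote an optimal solution over the combined ground set. I would split $O^\star$ along the two ground sets by defining $O_1 = O^\star \cap N_1$ and $O_2 = O^\star \setminus N_1 \subseteq N_2$. Note that $O_1 \subseteq N_1$, $O_2 \subseteq N_2$, $|O_1|, |O_2| \leq |O^\star| \leq k$, and $O_1 \cup O_2 = O^\star$.

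Next I would invoke subadditivity of nonnegative monotone submodular functions: $f(O^\star) = f(O_1 \cup O_2) \leq f(O_1) + f(O_2)$. (This follows immediately from submodularity applied telescopically, using $f(\emptyset) \geq 0$.) Since $O_1$ is a feasible solution over $N_1$ with $|O_1| \leq k$, we have $f(O_1) \leq \max_{S \subseteq N_1, |S| \leq k} f(S) \leq f(S_1)/\alpha_1$ by the hypothesis on $S_1$, and similarly $f(O_2) \leq f(S_2)/\alpha_2$.

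Combining these bounds yields
\[
f(O^\star) \;\leq\; \frac{f(S_1)}{\alpha_1} + \frac{f(S_2)}{\alpha_2} \;\leq\; \frac{2 \cdot \max\{f(S_1), f(S_2)\}}{\min\{\alpha_1, \alpha_2\}},
\]
which after rearrangement gives exactly the claimed inequality $\max\{f(S_1), f(S_2)\} \geq \tfrac{1}{2} \min\{\alpha_1, \alpha_2\} \cdot f(O^\star)$. There is no real obstacle here; the only subtlety worth double-checking is the use of subadditivity, which requires the function to be nonnegative (which is standard in this setting and implicit in the paper's use of $f : 2^V \to \mathbb{R}_{\geq 0}$), and the fact that partitioning $O^\star$ rather than taking arbitrary subsets is what keeps the sizes of $O_1$ and $O_2$ bounded by $k$.
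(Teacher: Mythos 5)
Your proof is correct and follows essentially the same approach as the paper: both bound $f(O^\star)$ by the sum of $f$ on the two pieces of the optimum living in $N_1$ and $N_2$ (you via a disjoint partition and subadditivity, the paper via the possibly overlapping intersections $O^\star\cap N_1$ and $O^\star\cap N_2$ and an explicit telescoping of marginals), and then bound each piece by the corresponding approximate maximum. The difference is purely cosmetic, and your note that nonnegativity (i.e.\ $f(\emptyset)\ge 0$) is needed for subadditivity is the right caveat.
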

\begin{proof}
Let $O  = \{o_1, \ldots, o_k\}= \argmax_{S \subseteq N_1 \cup N_2: |S| \leq k} f(S)$ be an arbitrary ordering of the optimal elements. We have that 
\begin{align*}
 \max_{S \subseteq N_1 \cup N_2: |S| \leq k} f(S)  = \ & f(O) \\
 = \ &\sum_{i=1}^k f_{\{o_1, \cdots, o_{i-1}\}}(o_i) \\
	\leq \ &\sum_{i=1}^k \1[o_i \in N_1] f_{\{o_1, \cdots, o_{i-1}\}\cap N_1}(o_i) +  \sum_{i=1}^k \1[o_i \in N_2] f_{\{o_1, \cdots, o_{i-1}\}\cap N_2}(o_i)	\\	
	= \ &f(O \cap N_1) + f(O \cap N_2) \\
  \leq \ &\max_{S \subseteq N_1: |S| \leq k} f(S) + \max_{S \subseteq N_2: |S| \leq k} f(S) \\
  \leq \ &f(S_1) / \alpha_1 + f(S_2) / \alpha_2 \\
  \leq \ &2 \max\{ f(S_1), f(S_2)\} / \min\{\alpha_1, \alpha_2\}
\end{align*}
where the first inequality is by submodularity.
\end{proof}

By combining Lemma~\ref{lem:robust_wu} and Lemma~\ref{lem:partition_approx_wu}, we obtain the main lemma for the approximation guarantee.

\combinedsol*

\sizeA*
\begin{proof}
The number of calls to  $\dynamicins$ is $|\{a : \exists t \text{ s.t. } a \in V_t^2 \setminus V_{t-1}^2\}|$. Since $V_t^2 \setminus V_{t-1}^2 = (V_t \setminus \hV_t)\setminus (V_{t-1} \setminus \hV_{t-1}^1)$,
\begin{align*}
   |\{a : \exists t \text{ s.t. } a \in V_t^2 \setminus V_{t-1}^2\}| & \leq |\{a : \exists t \text{ s.t. } a \in (V_t \setminus V_{t-1})\setminus \hV_t\}| \\
   & \qquad \qquad +  |\{a : \exists t \text{ s.t. } a \in (\hV_{t-1} \setminus \hV_t)\cap V_t\}|.
\end{align*}
Next, we have
\begin{align*}
    |\{a : \exists t \text{ s.t. } a \in (V_t \setminus V_{t-1})\setminus \hV_t\}| 
    & = |\{a : a \not \in \hV_{t^+_a}\}| \\
    & = |\{a : \hat{t}^+_a > t^+_a + w \text{ or } \hat{t}^-_a < t^+_a - w\}| \\
     & = |\{a : \hat{t}^+_a > t^+_a + w \text{ or } \hat{t}^-_a < t^-_a - w\}| \\
    & \leq  |\{a : |\hat{t}^+_a - t^+_a| > w \text{ or } |\hat{t}^-_a - t^-_a| > w \}|\\
    & \leq \eta 
\end{align*}
Similarly,
\begin{align*}
    |\{a : \exists t \text{ s.t. } a \in (\hV_{t-1} \setminus \hV_t)\cap V_t\}| 
    & = |\{a : a \in V_{\hat{t}^-_a + w}\}| \\
    & = |\{a : t^-_a \geq \hat{t}^-_a + w\}| \\
    & \leq  |\{a : |\hat{t}^+_a - t^+_a| > w \text{ or } |\hat{t}^-_a - t^-_a| > w \}|\\
    & \leq \eta 
\end{align*}

Combining the above inequalities, we get that the number of calls to $\dynamicins$ is  $|\{a : \exists t \text{ s.t. } a \in V_t^2 \setminus V_{t-1}^2\}| \leq 2 \eta.$
\end{proof}

\section{Missing Proof from Section~\ref{sec:updatesol}}
\label{app:updatesol}

\dynamicthres*

\begin{proof}
The first condition that $f(\sol_t) \geq \frac{\gamma}{2k} |\sol_t|$ follows directly from the fact that each element $e$ that is added to $\sol_t$
has a marginal contribution of at least $\gamma/2k$ to the partial solution.
Some elements could have been deleted 
from the partial solution since the time $e$ 
was added, but this can only increase the contribution of $e$ due to submodularity.

We will now show that if $|\sol_t| < (1-\epsilon) k$ then for any 
$S \subseteq V(A_t)$ we have $f_{\sol_t}(S) < \frac{|S| \gamma}{2k} + \epsilon \gamma$.
% for each element $a \in V_t \setminus \sol_t$ we have $f_{\sol_t}(a ) < \frac{(1+\epsilon)\gamma}{2k}$.
We will use the set $X$ defined 
in the proof of Theorem 5.1 in \cite{lattanzi2020fully}.
% $$
% X = \cup_{0 \leq \ell \leq T} \cup_{1 \leq j \leq R} S_{j,\ell}^{t_\ell}
%     \,.
% $$
% where $S_{j,\ell}^{t_\ell}$ denotes the 
% solution that belongs to level $\ell$ and bucket $j$ at the time $t_\ell$ which is the most recent 
% time when bucket $\ell$ was reconstructed.
The proof of Theorem 5.1 in \cite{lattanzi2020fully} shows that $|\sol_t| \geq (1-\epsilon)|X|$.
Using this, the condition that $|\sol_t| < (1-\epsilon) k$ implies that $|X| < k$.
Hence, we fall in the case 2 of the 
proof of Theorem 5.1 
which shows that $f(e | X) \leq \gamma/2k $
for all $e \in V_t \setminus \sol_t$.
We then have that 
\begin{align*}
f_{\sol_t}(S) &= f(S \cup \sol_t) - f(\sol_t) \leq f(S \cup X) - f(\sol_t) \\
&= f(S \cup X) - f(X) + f(X) - f(\sol_t) \\ 
&\leq f(X\cup S) - f(X) + f(X) - (1-\epsilon) f(X) \\ 
&\leq \sum_{e \in S} f_X(e)  + \epsilon f(X) \\
&\leq |S| \cdot \frac{\gamma}{2k} + \epsilon \cdot \frac{1+\epsilon}{1-\epsilon} \cdot \gamma
\,,
\end{align*}
where the inequality $f(X) \leq \frac{1+\epsilon}{1-\epsilon}\gamma$ follows because  $f(X) \leq  f(\sol_t)/(1-\epsilon) \leq \frac{1+\epsilon}{1-\epsilon} \cdot \gamma$.
\end{proof}

\if 0

In this section we will prove the following lemma:
\label{app:robust_using_dynamic}
\begin{lem}
For any monotone submodular function $f : 2^V \rightarrow \mathbb{R}$,    $k \geq 0 $,  
$d \geq 0$, and $\epsilon > 0$,
the \robustone \ algorithm of \cite{kazemi2017deletion} returns  a pair of sets $(Q, R)$ that is $(d, \epsilon)$-strongly robust.
\end{lem}
\begin{proof}
We consider the $\robustone$ algorithm due to \cite{kazemi2017deletion}. We will show that this algorithm satisfies the properties outlined 
in the above lemma. The first property is easy to check $|Q| \leq k$
and $|R| \leq O(d)$ in \cite{kazemi2017deletion}.

We now look at the second property. 
Fix $V'$ such that $|V' \Delta V| \leq d$.
We first show that 

\begin{lem}
$\E[f(Q' \setminus D )] \geq (1- 2\epsilon) E[f(Q)]$ and $\E[f(S)] \geq (1-2 \epsilon) \E[f(S \cup Q)]$.
\end{lem}
\begin{proof}
   This proof follows from the fact that each element in $Q$ is selected with probability at most $\epsilon/d$. Moreover, we have that $|Q \setminus (Q' \setminus D) \leq d$.
   Hence, the probability that any element is missing is only $\epsilon$ which gives the desired bound.
\end{proof}

\begin{lem}
$f(S \cup Q) \geq \frac{1-\epsilon}{2}\OPT$.
\end{lem}
\begin{proof}
\end{proof}

\begin{lem}
$\E[f(Q' \setminus D) + \max_{S \subseteq R' \setminus D} f_{Q}(S)] \geq \E[f(S)].$
\end{lem}
\begin{proof}
    
\end{proof}

\end{proof}

\fi

\section{Missing Proof from Section~\ref{sec:precomp}}
\label{app:precomputation}

\robustfromdynamic*
\begin{proof}
We will first set up some notation. 
The algorithm of \cite{lattanzi2020fully} creates 
several buckets $\{A_{i,\ell}\}_{ i\in [R], \ell \in [T]}$ where $R = O(\log k/\epsilon)$ is the number of thresholds  
and $T = O(\log n)$ is the number of levels. 
A solution is constructed using these buckets by iteratively selecting $S_{i,\ell}$ from $A_{i,\ell}$ starting from the smallest $(i,\ell) = (0,0)$ to the largest $(i,\ell) = (R,T)$.
The buffer at level $\ell$ is denoted by $B_\ell$.
Each set $S_{i,\ell}$ is constructed by iteratively selecting elements uniformly at random (without replacement) from the corresponding bucket.
More precisely, given indices $i,\ell$, the algorithm adds elements $e$ to the solution $S_{i,\ell}$ one by one in line $7$ of Algorithm $6$ in \cite{lattanzi2020fully} only if $f(e | S)\geq \tau_i \geq \gamma/2k$.

We will now show how to construct $(Q,R)$
from the $\dynamic$ data structure $A$.
Note that we will extract $(Q,R)$ by observing the evolution of the data-structure $A$.
Hence, we do not need to make any additional oracle queries in order to extract $(Q,R)$.

% We will call the buckets such that their sizes 
% are at least $d/\epsilon + k$ as robust buckets,
% and the buckets whose size is at most $d/\epsilon + k$ as non-robust buckets.

% The set $Q$ is constructed from buckets of size $\Omega(d/\epsilon)$.
% The remaining elements are a part of $R$.
% We will show that the elements in $Q$ are robust 
% due to the fact that they are selected randomly 
% from a large set. 
% The remaining elements are in $R$. And these are the only elements of large value.
The following algorithm gives a procedure for extracting $(Q,R)$ by observing the actions of the peeling algorithm (Algorithm 6) in \cite{lattanzi2020fully}.

\begin{algorithm}[H]
\caption{\textsc{Robust1FromDynamic}}
\setstretch{1.08}
\hspace*{\algorithmicindent} \textbf{Input:} function $f : 2^V \> \R$,  dynamic data-structure $A$, constraint $k$, deletion parameter $d$, parameter $\epsilon$
%deletion-robust sizes $E$
\begin{algorithmic}[1]
%\For{$\eta \in E$}
\State $Q \leftarrow \emptyset$
\State $R \leftarrow \emptyset$
\For{$\ell \in [T]$}
\For{$i \in [R]$}
% \State Define $A_{i,\ell} \subseteq V(A)$ as the $i$-th bucket in level $\ell$ analogous to \cite{lattanzi2020fully}
\If{$n/2^\ell > \frac{d}{\epsilon} + k$}
\State $Q \leftarrow  Q \cup S_{i,\ell}$
\Else
\State $R \leftarrow  R \cup A_{i,\ell} \cup B_\ell$
\EndIf

% \State Consider the execution of line 7 in Algorithm 6 of \cite{lattanzi2020fully} where an element is selected uniformly at random from $N_{i,\ell} \subseteq A_{i,\ell}$ 
% \State If the set $N_{i,\ell} \subseteq A_{i,\ell} $ is such that $|N_{i,\ell}| < d/\epsilon$, then $R \leftarrow R \cup N_{i,\ell}$
% \State Else, if the element $e \in N_{i,\ell}$ is selected uniformly at random in line 7 and added to $S$, then $Q \leftarrow Q \cup \{e\}$
% \State If $|Q| < k$ then $R \leftarrow \cup_{\ell > \bar{\ell}} B_{\ell} \cup \Big(\cup_{i} A_{i,\ell} \Big)  $
% else $R \leftarrow \emptyset$
\EndFor
\EndFor
\State{\textbf{return} $Q, R$}
\end{algorithmic}
\label{alg:robustone}
\end{algorithm}

We will now show that the conditions required in 
Definition~\ref{def:robust_properties}
by the above algorithm.

\begin{itemize}
    \item \textbf{Size.} The fact that $|Q| \leq k$ follows using the $Q \subseteq \cup_{i, \ell} S_{i,\ell} = S$ where $S$ is the solution output by $\dynamic$ with $|S| \leq k$.

    We will now show that the size of $R = O(\frac{\log k}{\epsilon} \cdot (\frac{d}{\epsilon} + k))$.
    Let $\bar{\ell} \in [T]$ be the largest value such that the corresponding bucket size $n/2^{\bar{\ell}}$ is at least $ d/\epsilon + k$.
    Recall that
    \[
    R = \cup_{\ell < \bar{\ell}} B_{\ell} \cup \Big(\cup_{i \in [R]} A_{i,\ell} \Big)
        \,.
    \]
    We first have that
    \[
    |\cup_{\ell < \bar{\ell}} B_\ell| \leq \sum_{\ell < \bar{\ell}} n/2^{\ell} \leq 2 n/2^{\bar{\ell}-1} = O(\frac{d}{\epsilon} + k) 
        \,.
    \] 
    Secondly, 
    \[
    |\cup_{\ell < \bar{\ell}} \cup_{i \in [R]} A_{i,\ell}| \leq \sum_{\ell < \bar{\ell}} \sum_{ i\in [R]} n/2^{\ell} \leq \frac{\log k}{ \epsilon} \cdot 2 n/2^{\bar{\ell}-1} = O\Big(\frac{\log k}{\epsilon}\cdot \Big(\frac{d}{\epsilon} + k \Big) \Big) 
        \,.
    \] 
    Combining the above inequalities gives us the desired bound on the size of $R$.

   %  The fact that $|R| \leq O(d \log k/ \epsilon)$
   %  follows from the fact that each $N_{i,\ell}$ 
   %  that is added to $R$ is at most size $d/\epsilon$,
   %  and the fact that $N_{i, \ell} \subseteq N_{i, \ell+1} \subseteq N_{i, \ell+2} \subseteq \cdots$
   % \[
   % |\cup_{i,\ell} N_{i,\ell}| \leq |\cup_{i} N_{i,\ell_i}| \leq  \sum_{i} \frac{d}{\epsilon} = O(\frac{d\log k}{\epsilon})
   %  \,.
   % \] 

    \item \textbf{Value.} The fact that $f(Q) \geq |Q| \gamma/2k$ follows directly using the property of the dynamic algorithm of \cite{lattanzi2020fully}
    an element $e$ is added to $S_{i, \ell}$ during a call to Bucket-Construct$(i, \ell)$ only if  $f_{S'}(e) \geq \tau_i$
    where $\tau_i \geq \gamma/2k$ and $S'$ is the partial solution. It is possible that some of the elements of the partial solution have been deleted 
    since the last time Bucket-Construct was 
    called on index $(i,\ell)$, but the marginal 
    contribution of $e$ can only increase with deletions due to submodularity.

     % The fact that $R \subseteq \{e \in V : f_{Q}(e) \geq \gamma/2k\}$ follows easily from the fact that the bucket at lower levels 
     % are filtered out.
     % Any element that has at least $\gamma$ contribution to $Q$ will have at least $\gamma$ contribution to the time when the bucket was constructed. Hence, all these elements will be a part of the solution. 

We now show that,  if $|Q| < k$, then for any 
for any set $S \subseteq V \setminus R$ we have $f_Q(S ) < |S| \gamma / (2k) + \epsilon \gamma$.
    % We now show that $R \supseteq \{e \in V : f_{Q}(e) \geq \gamma/2k\}$ if $|Q| < k$.
Let $Q'$ denote the set of elements in $Q$ at the time when Level-Construct$(\bar{\ell})$ was called last. We have
    % Let $S_1\subseteq S$ be the set of elements 
    % $e$ that were inserted before the latest time Level-Construct$(\bar{\ell})$ was called,
    % and let $S_2 = S \setminus S_1$.
    
    % Suppose, for the sake of contradiction, that there is an element $e \not\in R$ such that $f_Q(e) \geq \gamma/2k + \epsilon \gamma$.
    % There are two cases to consider: 
    
    % \textbf{Case 1:} $e$ was inserted before the latest time Level-Construct$(\bar{\ell})$ was called. In this case we will have
    \begin{align*}
    f_Q(S) &= f(Q \cup S) - f(Q) \leq f(Q'\cup S) - f(Q) \\
    &\leq f(Q'\cup S) - f(Q') + f(Q') - f(Q) \\ 
    &\leq f(Q'\cup S) - f(Q') + f(Q') - (1-\epsilon) f(Q') \\ 
    &\leq \sum_{e \in S} f_{Q'}(e)  + \epsilon f(Q') \\
    &\leq |S| \cdot \frac{\gamma}{2k} + \epsilon \cdot \frac{1+\epsilon}{1-\epsilon} \cdot \gamma \,,
    \end{align*}
where the inequality $f(Q') \leq \frac{1+\epsilon}{1-\epsilon}\gamma$ follows because  $f(Q') \leq  f(Q)/(1-\epsilon) \leq \frac{1+\epsilon}{1-\epsilon} \cdot \gamma$,
and the inequality $f_{Q'}(e) \leq \gamma/2k$ follows by considering two cases: (1) $e$ was inserted before the latest time Level-Construct$(\bar{\ell})$ was called in which case $e$ cannot have high marginal contribution to $Q'$ as otherwise it would have been inserted in $Q'$, (2) $e$ was inserted after the latest time Level-Construct$(\bar{\ell})$ was called in which case either $e$ has low marginal contribution to $Q'$ or it would be included in the set $R$.

% $e$ would be either in the buffer $B_{\bar{\ell}-1}$ or in the bucket $A_{i, \bar{\ell}-1}$ for some $i$ since its marginal contribution to $Q$ is large.

    % \aacomment{Need to verify the bound on $f(Q')$}.

    % \textbf{Case 2:} $e$ was inserted after the latest time Level-Construct$(\bar{\ell})$ was called. In this case $e$ would be either in the buffer $B_{\bar{\ell}-1}$ or in the bucket $A_{i, \bar{\ell}-1}$ for some $i$ since its marginal contribution to $Q$ is large.

% Probability of selection is smaller 
% than $\epsilon/d$. Each element is selected from a 
% set of size 

\item \textbf{Robustness.}  For any set $D \subseteq V$ such that $|D| \leq d$, $\E_Q[f(Q \setminus D)] \geq (1-\epsilon)  f(Q).$ 
We have that 
\[
f(Q) = \sum_{\ell \geq \bar{\ell} } \sum_{i \in [R]} \sum_{j \in [|S_{i,\ell}|]} f_{S_{ \ell, i,j}^{partial}}(e_{\ell, i,j})
    \,.
\]
where $e_{\ell, i,j}$ is the $j$-th element that 
was added to $S_{i,\ell}$ and $S_{ \ell, i,j}^{partial}$ is the partial solution before the $j$-th element was added.
We have that 
\[
\sum_{\ell \geq \bar{\ell} } \sum_{i \in [R]} |S_{i,\ell}|\tau_i \geq f(Q) \leq (1+\epsilon) \sum_{\ell \geq \bar{\ell} } \sum_{i \in [R]}|S_{i,\ell}|\tau_i
\]
Since, each element in $S_{i,\ell}$ is selected
from a set of size at least $d/\epsilon$,
it can only be deleted with probability at most $\epsilon$.
We have that
\begin{align*}
\E[f(Q\setminus D)] &= \sum_{\ell \geq \bar{\ell} } \sum_{i \in [R]} \sum_{j \in [|S_{i,\ell}|]} \E[\1[e_{\ell, i,j} \not\in D] \cdot f_{A_{ \ell, i,j}^{partial}}(e_{\ell, i,j}) ]\\
    &\geq \sum_{\ell \geq \bar{\ell} } \sum_{i \in [R]} \sum_{j \in [|S_{i,\ell}|]} \E[\1[e_{\ell, i,j} \not\in D] \cdot f_{S_{ \ell, i,j}^{partial}}(e_{\ell, i,j} )] \\
    & \geq \sum_{\ell \geq \bar{\ell} } \sum_{i \in [R]}\sum_{j \in [|S_{i,\ell}|]} \Pr(e_{\ell, i,j} \not\in D) \cdot \tau_i \\
    & \geq  (1+\epsilon) \sum_{\ell \geq \bar{\ell} } \sum_{i \in [R]}|S_{i,\ell}|\tau_i \\
    &\geq \frac{1+\epsilon}{1-\epsilon} f(Q). \qedhere
\end{align*}
\end{itemize}
\end{proof}

\end{document}